\newtheorem{theorem}{Theorem}
\newtheorem{props}{Proposition}
\newtheorem{lem}{Lemma}
\newtheorem{coro}{Corollary}
\newtheorem{defn}{Definition}
\newcommand{\Ker}[1]{\text{Ker}\left(#1 \right)}
\newcommand{\dd}{{\rm d}}
\newcommand{\Xone}{\mathbf{X}}
\newcommand{\Xino}{{X}}
\newcommand{\EMone}{$({\rm m}_1)$}
\newcommand{\EMtwo}{$({\rm m}_2)$}
\newcommand{\Yone}{{\mathbf{Y}}}
\newcommand{\Yino}{{Y}}
\newcommand{\Rn}{{\mathbb{R}}^n}
\newcommand{\Rm}{{\mathbb{R}}^m}
\newcommand{\y}{y}
\renewenvironment{proof}[1][\proofname]{{\noindent \bfseries #1.}}{
\qed\vspace{8pt} \mbox{}
}
\newcommand{\mytitle}{Low-Complexity Multiclass Encryption by Compressed Sensing}
\begin{document}
\title{Low-Complexity Multiclass Encryption \\ by Compressed Sensing}

\author{
\IEEEauthorblockN{
Valerio Cambareri, \IEEEmembership{Student Member, IEEE}, Mauro Mangia, \IEEEmembership{Member, IEEE}, \\ Fabio Pareschi, \IEEEmembership{Member, IEEE}, Riccardo Rovatti, \IEEEmembership{Fellow, IEEE}, Gianluca Setti, \IEEEmembership{Fellow, IEEE}}

\thanks{Copyright \copyright 2015 IEEE. Personal use of this material is permitted. However, permission to use this material for any other purposes must be obtained from the IEEE by sending a request to \url{pubs-permissions@ieee.org}.}
\thanks{V. Cambareri and R. Rovatti are with the Department of Electrical, Electronic and Information Engineering (DEI), University of Bologna, Italy (e-mail: \url{valerio.cambareri@unibo.it}, \url{riccardo.rovatti@unibo.it}).}%
\thanks{M. Mangia is with the Advanced Research Center on Electronic Systems (ARCES), University of Bologna, Italy (e-mail: \url{mmangia@arces.unibo.it}).}%
\thanks{F. Pareschi and G. Setti are with the Engineering Department in Ferrara (ENDIF), University of Ferrara, Italy (e-mail: \url{fabio.pareschi@unife.it}, \url{gianluca.setti@unife.it}).}%
}

\markboth{IEEE Transactions on Signal Processing}{Cambareri \MakeLowercase{\emph{et al.}:} \textsc{\mytitle}}

\maketitle

\begin{abstract}

The idea that compressed sensing may be used to encrypt information from unauthorised receivers has already been envisioned, but never explored in depth since its security may seem compromised by the linearity of its encoding process. 

In this paper we apply {this simple} encoding to define a general {private-key} encryption scheme in which a transmitter distributes the same encoded measurements to receivers of different classes, which are provided partially corrupted encoding matrices and are thus allowed to decode the acquired signal at provably different levels of recovery quality.

The security properties of this scheme are thoroughly analysed: firstly, the properties of our multiclass encryption are theoretically investigated by deriving {performance bounds} on the recovery quality attained by lower-class receivers {with respect to high-class ones}. Then we perform a statistical analysis of the {measurements} to show that, although not perfectly secure, compressed sensing grants some level of security that comes at almost-zero cost and {thus may benefit} resource-limited applications. 

In addition to this we report some exemplary applications of multiclass encryption by compressed sensing of speech signals, electrocardiographic tracks and images, in which quality degradation is quantified as the impossibility of some feature extraction algorithms to obtain sensitive information from suitably degraded signal recoveries.
\end{abstract}

\begin{IEEEkeywords}
Compressed sensing, encryption, security, secure communications
\end{IEEEkeywords}

\IEEEpeerreviewmaketitle

\section{Introduction}
\IEEEPARstart{W}{ith} the rise of paradigms such as wireless sensor networks \cite{akyildiz2002survey} where a large amount of data is locally acquired by sensor nodes and transmitted {remotely} for further processing, defending the privacy of digital data gathered and distributed by {such networks} is a relevant issue. {This privacy requirement is normally met} by means of encryption stages securing the transmission channel \cite{wang2006wsn}, implemented in the digital domain and {preceded by} analog-to-digital conversion of the signal. Due to their complexity, these cryptographic modules (\emph{e.g.} those implementing {the Advanced Encryption Standard (AES)} \cite{daemen2002design}) may require a considerable amount of resources, especially in terms of power consumption.

Compressed Sensing (CS) \cite{Donoho_2006, Candes_2008} is a {mature} signal processing technique used in the development of novel data acquisition schemes. CS exploits the structure of certain signals to simultaneously perform data compression and acquisition at the physical interface between the analog and digital domain, thus allowing acquisition at sub-Nyquist rates \cite{tropp2010beyond, mishali2011xampling}. This efficient acquisition is commonly followed by a decoding algorithm that maps an undersampled set of CS-encoded measurements into a recovery of the original signal. 
Within this framework, many sensing architectures have been proposed for {the acquisition of a variety of} signals \cite{Haboba_JETCAS_2012, duarte2008single, chandrakasan2012JSSC}.

We investigate on the possibility of using CS {with Bernoulli random encoding matrices \cite{candes2006near}} as a physical-layer method to embed security properties in the acquisition process. Although it is well known that {CS cannot be regarded as perfectly secure \cite{Allerton_2008}} we will formalise its main weaknesses and strengths {as} an exploration of the trade-off between {achievable} security properties and resource requirements in low-complexity {acquisition} systems, for which an almost-zero cost encryption mechanism is an {appealing} option.

In more detail, {we here devise an encryption strategy relying} on the fact that any receiver 	{attempting to decode} the CS measurements must {know the true} encoding matrix used in the acquisition process {to attain} exact signal recovery. In {partial or complete} defect of this information, the recovered signal will be subject to a significant amount of recovery noise \cite{herman2010general}. 

We exploit this decoder-side sensitivity to provide multiple {recovery quality-based} levels {(\emph{i.e.} \emph{classes})} of {access to the information carried in the signal}. {In fact}, when {the true encoding matrix} is completely unknown the signal {is fully encrypted}, whereas if a receiver {knows it up to some random perturbations} the signal will {still} be recoverable, albeit with limited quality. 
{We therefore aim to control the recovery performances of users (receivers) belonging to the same class by exploiting their ignorance of the true encoding matrix. Since} these encoding matrices are generated from the available private keys at the corresponding decoders, high-class receivers are given a complete key and thus the true encoding matrix, lower-class receivers are given an incomplete key resulting in a partially corrupted encoding matrix. 
{To ensure that this mismatch goes undetected by lower-class receivers, we only alter the sign of a randomly chosen subset of the entries of the true encoding matrix, which is itself assumed to be a realisation of a $\pm 1$-valued Bernoulli random matrix.}

This contribution is structured as follows: in Section \ref{section1} we briefly review the theoretical framework of CS, introduce the mathematical model of two-class and multiclass CS, and {perform upper and lower bound analyses} on the recovery error norm suffered by lower-class receivers depending on the chosen amount of perturbation.

Section \ref{section2} addresses the robustness of {CS with universal random encoding matrices\cite{candes2006near}} against straightforward statistical attacks. While CS is not perfectly secret in the Shannon sense \cite{Allerton_2008} and in general suffers from continuity due to its linear nature, we 
prove that, asymptotically, nothing can be inferred about the encoded signal except for its power and formalise this fact in a relaxed secrecy condition. 
{Moreover, we show how the convergence to this behaviour is sharp for finite signal dimensions}. Hence, eavesdroppers are practically unable to extract but a very limited amount of information from the sole statistical analysis of CS measurements. Other non-statistical attacks of a more threatening nature will be treated separately in a future contribution.

In Section \ref{section4} we propose example applications of multiclass CS to concealing sensitive information in images, electrocardiographic tracks and speech signals. The recovery performances are evaluated {in a signal processing perspective} to prove the efficacy of this strategy at integrating {some security properties} in {the sensing process}, with the additional degree of freedom of allowing multiple quality levels and eventually hiding selected signal features to certain user classes.

\subsection{Relation to Prior Work}
This contribution mainly improves on two separate lines of research: ($i$) statistical security analyses of the CS encoding and ($ii$) the effect of encoding matrix perturbations on signal recovery. Line ($i$) stems from the {security analysis} in \cite{Allerton_2008}. Both \cite{Allerton_2008} and \cite{orsdemir2008security} showed how brute-force attacks are computationally infeasible, so that some security properties could indeed be provided in relevant applications \cite{cambareri2013twoclass, zhang2013VLSI}. 
We deepen the results in \cite{Allerton_2008} by introducing an asymptotic notion of secrecy for signals having the same power, and by verifying it for CS. We assess its consequences for finite dimensions by means of hypothesis testing and develop a non-asymptotic analysis of the rate at which acquired signals having the same energy become indistinguishable when observing the probability distribution of their subgaussian CS measurements. 
This is obtained by adapting a recent result in probability theory \cite{klartag2012variations}. 
On the other hand, line ($ii$) relates to studying the effect of our particular sparse random perturbation matrix on signal recovery performances, a field that is closely related to statistical sparse regression with corrupted or missing predictors \cite{loh2012corrupted, loh2012high}. The authors of \cite{herman2010general} quantify this effect in a general framework: 
we will adapt these results to our case for a formal analysis of the \emph{worst-case} signal recovery performances of lower-class users, while developing some new arguments to find their \emph{best-case} recovery performances -- our aim being the distinction between different user classes. 

\section{Multiclass Compressed Sensing}
\label{section1}
\subsection{Brief Review of Compressed Sensing}
\label{prelim}

Compressed sensing \cite{Donoho_2006, Candes_2008} {is summarised by} considering the following setting: let $x$ be a {vector} in $\Rn$ and $y \in \Rm$ a vector of \emph{measurements} obtained from $x$ by applying a linear dimensionality-reducing transformation $y = A x$, \emph{i.e.} 
\begin{equation}
y_j = \sum^{n-1}_{l = 0} A_{j,l} x_l \, , \ j = {0, \ldots, m-1}
\label{measures}
\end{equation}
\noindent with $A_{m \times n}$ the \emph{encoding matrix}. 
Under suitable assumptions, fundamental results \cite{Donoho_2006, Candes_2005, Candes_2006} showed it is possible to recover $x$ from $y$ even if $m < n$. 

The first of such assumptions is that $x$ has a {$k$-\emph{sparse} representation}, \emph{i.e.} we assume that {there exists} a \emph{sparsity basis} $D_{n \times n}$ {such that $x = D s$}, {with $s \in \Rn$ having a support of cardinality $k$}. This cardinality is also indicated as $\|s\|_0 = k$ with $k \ll n$. 
Asserting that $x$ is represented by $k < m < n$ non-zero coefficients in a suitable domain intuitively means that its intrinsic information content is smaller than the apparent dimensionality. In the following we will assume that $D$ is an orthonormal basis (ONB).

A second assumption must be made on the structure of $A$. {Many conditions have been formulated in the literature (\emph{e.g.} the \emph{restricted isometry property}, RIP \cite{candes2008restricted}) to guarantee that the information in $s$ is preserved through the mapping $y =A D s$.} 
To the purpose of this paper it suffices to say that the most \emph{universal} option (\emph{i.e.} independently of $D$) is choosing $A$ as typical realisations of a random matrix with {independent and identically distributed} (i.i.d.) entries from a subgaussian distribution, \emph{e.g.} an i.i.d. Gaussian or Bernoulli random matrix \cite{candes2006near}. 
We will let $A$ be an ${m\times n}$ i.i.d. Bernoulli random matrix\footnote{This notation is used both for the random matrix and its realisations, disambiguating with the term \emph{instance} where needed.} unless otherwise noted. 

When both these conditions hold, $s$ can be recovered from $y = A D s$ as the sparsest vector solving the hard problem
\begin{equation}
{s} = \mathop{\text{arg} \, \min}_{\xi \in \mathbb{R}^n} \|\xi\|_0 \ \text{s. t.} \ y = A D \xi \tag{$P_0$}
\label{eq:minl0}
\end{equation}
Moreover, if the dimensionality of the measurements $m \sim k \log \frac{n}{k}$ is not too small w.r.t. that of $x$ and its sparsity w.r.t. $D$, \eqref{eq:minl0} can be relaxed to the convex $\ell_1$-norm minimisation
\begin{equation}
\hat{s} = \mathop{\text{arg} \, \min}_{\xi \in \mathbb{R}^n} \|\xi\|_1 \ \text{s. t.} \ y = A D \xi \tag{$P_1$}
\label{eq:minl1}
\end{equation}
\noindent still yielding $\hat{s} = s$ (provided that $A$ is carefully chosen \cite{Candes_2006}), with \eqref{eq:minl1} being a linear programming problem {solved with} polynomial-time algorithms \cite{Candes_2005}. In the following, we will refer to this problem as the $\min \, \ell_1$ {decoder} (also known as \emph{basis pursuit}, BP).

\subsection{A Cryptographic Perspective} 
\label{cryptopersp}
Standard CS {may be interpreted} as a private key cryptosystem where $x$ is the \emph{plaintext}, the measurement vector $y$ is the \emph{ciphertext} and the \emph{encryption algorithm} is a linear transformation operated by the encoding matrix $A$ defining the acquisition process. In the classic setting, Alice acquires a plaintext $x$ by CS using $A$ and sends to Bob the ciphertext $y$; Bob is able to successfully recover $x$ from $y$ if he is provided with $A$ or equivalently the private key required to generate it.

Since {many} CS-based acquisition systems \cite{Haboba_JETCAS_2012, duarte2008single} entail the use of i.i.d. Bernoulli random matrices generated by a pseudorandom number generator (PRNG) we define \emph{encryption key} (or \emph{shared secret}) the initial seed which is expanded by the PRNG to generate a sequence of encoding matrices. In the following we will assume that the period of this sequence is sufficiently long to guarantee that in a reasonable observation time no two encoding matrices will be the same, \emph{i.e.} that each plaintext will be encoded with a different matrix. With this hypothesis, we let each instance of $A$ be a generic, unique element of the aforementioned sequence.

\subsection{Signal Models and Assumptions}
This paper will analyse the security properties of CS starting from some statistical properties of the signal being encoded as in \eqref{measures}. 
Rather than relying on its \emph{a priori} distribution, our analysis uses general moment assumptions that may correspond to many probability distributions on the signal domain. We will therefore adopt the following signal models:
\begin{enumerate}
\item[\EMone] for finite $n$, we let $\Xino = \{X_j\}^{n-1}_{j=0}$ be a real random vector (RV). Its realisations (finite-length plaintexts) $x = \begin{pmatrix} x_0, \cdots, x_{n-1} \end{pmatrix} \in \Rn$ are assumed to have finite energy $e_x = \|x\|^2_2$. We will let each $x = D s$ with $D$ an ONB and $s$ being $k$-sparse to comply with sparse signal recovery guarantees \cite{Candes_2006}. 
$X$ is mapped to the measurements' RV $Y =\{Y_j\}^{m-1}_{j = 0}$ (whose realisations are the ciphertexts $y$) as $Y = A X$, \emph{i.e.} each realisation of $(Y, A, X)$ is an instance of \eqref{measures}. 

\item[\EMtwo] for $n \rightarrow \infty$, we let $\Xone = \{X_j\}^{+\infty}_{j=0}$ be a real random process (RP). Its realisations (infinite-length plaintexts) $x$ are assumed to have finite power $W_x = \lim_{n \rightarrow \infty} \frac{1}{n} \sum^{n-1}_{j = 0} x^2_j$. We denote them as sequences ${x} = \{x^{(n)}\}^{+\infty}_{n = 0}$ of finite-length plaintexts $x^{(n)} = \begin{pmatrix} x_0, \cdots, x_{n-1} \end{pmatrix}$. $\Xone$ is mapped to either a RV $\Yino$ of realisations (ciphertexts) $y$ for finite $m$, or a RP $\Yone = \{Y_j\}^{+\infty}_{j=0}$ of ciphertexts $y$ for $m,n  \rightarrow \infty, \frac{m}{n} \rightarrow q$. Both cases are comprised of random variables $Y_j = \frac{1}{\sqrt{n}} \sum^{n-1}_{l=0} A_{j,l} X_l$.  {The $\frac{1}{\sqrt{n}}$ scaling is not only theoretically needed for normalisation purposes, but also practically required in the design of finite quantiser ranges for CS-based acquisition front-ends.}
\end{enumerate}
When none of the above models is specified, a single realisation of \eqref{measures} is considered as in the standard CS framework (Section \ref{prelim}).

\subsection{Multiclass Encryption by Compressed Sensing}
\label{multisec}

Let us consider a scenario where multiple users receive the same measurements $y$, know the sparsity basis $D$, but are made different by the fact that some of them know the true $A$, whereas the others only know an approximate version of it. The resulting mismatch between $A$ and its approximation used in the decoding process by the latter set of receivers will limit the quality of signal recovery as detailed below.

\subsubsection{Two-Class Scheme}\label{twoclssch} With this principle in mind a straightforward method to introduce perturbations is flipping the sign of a subset of the entries of the encoding matrix in a random pattern. More formally, let $A^{(0)}$ denote the \emph{initial encoding matrix} and $C^{(0)}$ a subset of $c < m \cdot n$ {index pairs} chosen at random for each $A^{(0)}$. We construct the \emph{true encoding matrix} $A^{(1)}$ by 
\[
A^{(1)}_{j,l} = \begin{cases}
A^{(0)}_{j,l}, & (j, l) \notin C^{(0)}\\
-A^{(0)}_{j,l}, & (j, l) \in C^{(0)}
\end{cases}
\]
\noindent and use it to encode $x$ as in \eqref{measures}. Although this alteration simply involves inverting $c$ randomly chosen sign bits in a buffer of $m \cdot n$ pseudorandom symbols, we will use its linear model
\begin{equation}
\label{eq:pert}
A^{(1)} = A^{(0)} + \Delta A
\end{equation}
\noindent where $\Delta A$ is a $c$-sparse random perturbation matrix of entries
\begin{equation}
\label{eq:spperturb}
\Delta A_{j, l} = \begin{cases}
0, & (j, l) \notin C^{(0)}\\
-2 A^{(0)}_{j, l}, & (j, l) \in C^{(0)}
\end{cases}
\end{equation}
or equivalently
\begin{equation}
\label{eq:spperturb1}
\Delta A_{j, l} = \begin{cases}
0, & (j, l) \notin C^{(0)}\\
2 A^{(1)}_{j, l}, & (j, l) \in C^{(0)}
\end{cases}
\end{equation}
\noindent with density $\eta = \frac{c}{m n}$, \emph{i.e.} the ratio of non-zero entries w.r.t. the product of the dimensions of $\Delta A$. By doing so, any receiver is still provided an encoding matrix differing from the true one by an instance of $\Delta A$. This perturbation is \emph{undetectable}, \emph{i.e.} $A^{(1)}$ and $A^{(0)}$  are statistically indistinguishable since they are equal-probability realisations of the same i.i.d. Bernoulli random matrix ensemble \cite{candes2006near} with all points in $\{-1, 1\}^{m \times n}$ having the same probability.

{A \emph{first-class} user receiving $y = A^{(1)} x = (A^{(0)} + \Delta A) x$ and knowing $A^{(1)}$ is able to recover}, in absence of other noise sources and with $m$ sufficiently larger than $k$, the exact sparse solution $\hat{s} = s$ by solving \eqref{eq:minl1} \cite{Candes_2005, Candes_2006}. A \emph{second-class} user only knowing $y$ and $A^{(0)}$ is instead subject to an equivalent signal- and perturbation-dependent noise term $\varepsilon$ due to missing pieces of information on $A^{(1)}$, \emph{i.e.} 
\begin{equation}
\label{eq:secondclass}
y = A^{(1)} x = A^{(0)} x + \varepsilon
\end{equation}
\noindent where $\varepsilon = \Delta A x$ is a pure disturbance since both $\Delta A$ and $x$ are unknown to the second-class decoder.

In general, performing signal recovery in the erroneous assumption that $y = A^{(0)} x$, \emph{i.e.} with a corrupted encoding matrix will lead to a noisy recovery of $x$. Nevertheless, upper bounds on the recovery error norm $\|\hat{x} - x\|_2$ (with $\hat{x} = D \hat{s}$, $\hat{s}$ an approximation of $s$) are well known for measurements affected by generic additive noise \cite[Theorem 1.1, 1.2]{Candes_2006}. These bounds have been extended in \cite{herman2010general} to a general perturbed encoding matrix model that encompasses \eqref{eq:secondclass}. We adapt these results in Section \ref{ubsection} to obtain a worst-case analysis of the second class recovery error norm. 

Moreover, to prove the difference between first- and second-class recovery performances, in Section \ref{lbsection} we develop a {lower bound}, \emph{i.e.} a best-case analysis of the second-class recovery error norm. Both performance bounds show a clear dependence on the perturbation density $\eta$, which is suitably chosen to fix the desired quality range for each class.

\subsubsection{Multiclass Scheme} The two-class scheme may be iterated to devise an arbitrary number $w$ of user classes: sign-flipping is now applied on disjoint subsets of {index pairs} $C^{(u)}, u = 0, \ldots, w-2$ of $A^{(0)}$ so that
$$A^{(u+1)}_{j,l} = \begin{cases} A^{(u)}_{j,l}, & (j, l) \notin C^{(u)} \\ -A^{(u)}_{j,l}, & (j, l) \in C^{(u)}\end{cases}$$
If the plaintext $x$ is encoded with $A^{(w-1)}$ then we may distinguish \emph{high-class} users knowing the complete encoding $A^{(w-1)}$, \emph{low-class} users knowing only $A^{(0)}$ and \emph{mid-class} users knowing $A^{(u+1)}$ with $u = 0, \ldots, w-3$. This simple technique can be applied to provide multiple classes of access to the information in $x$ by having different signal recovery performances at the decoder.

\subsubsection{A System Perspective}
\label{systempersp}
\begin{figure}[t]
	\centering 
	\includegraphics[width = 2.6in]{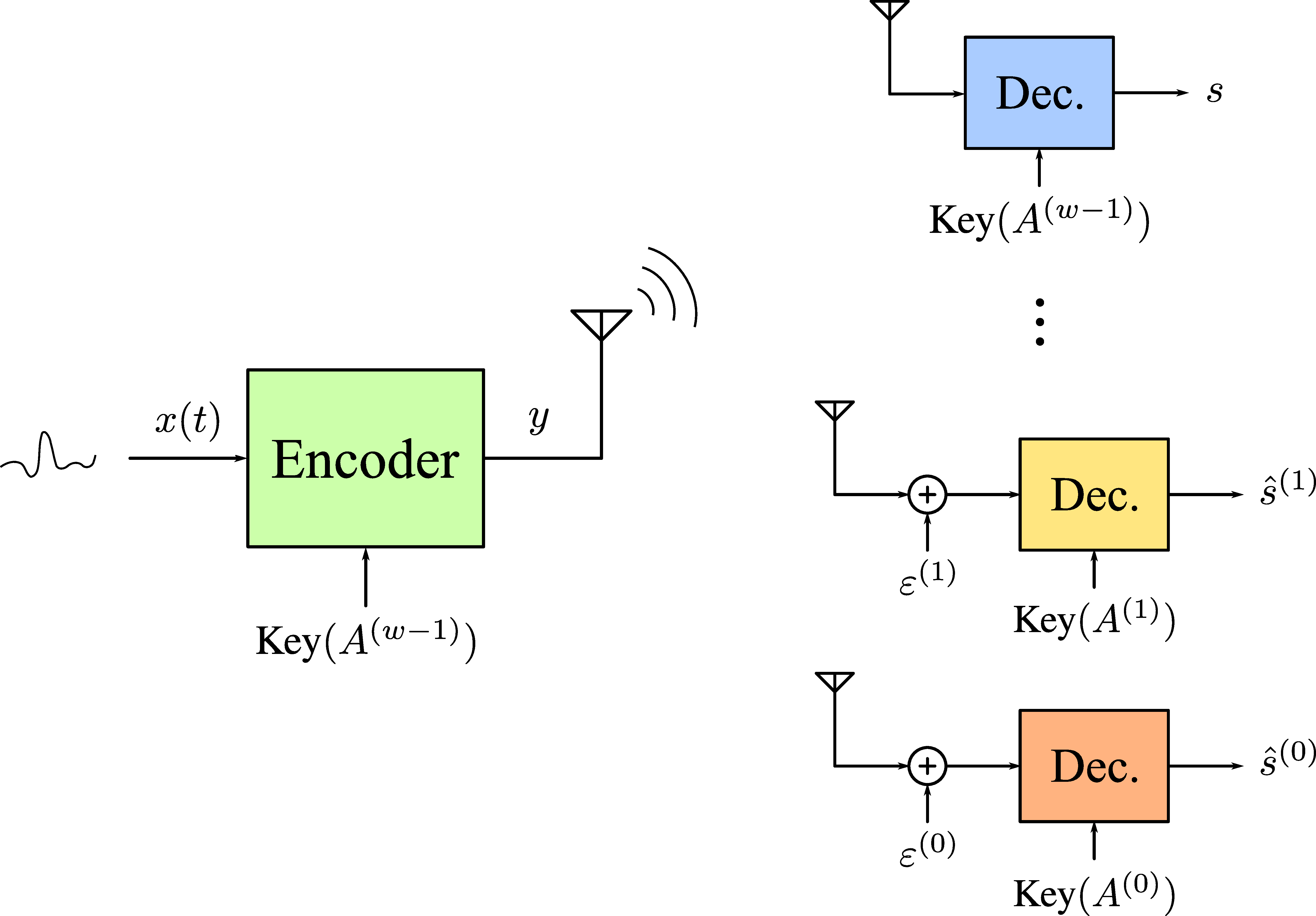}
	\caption{A multiclass CS network: the encoder acquires an analog signal $x(t)$ at sub-Nyquist rate and transmits the measurement vector $y$. Low-quality decoders reconstruct a signal approximation with partial knowledge of the encoding, resulting in additive perturbation noise $\varepsilon^{(u)}$ and leading to an approximate solution $\hat{s}^{(u)}$ for the $u$-th user class.}
	\label{fig:network}
\end{figure}

The strategy described in this section provides a multiclass encryption architecture where the shared secret between the CS encoder and each receiver is distributed depending on the quality level granted to the latter.

In particular, the full encryption key of a $w$-class CS system is composed of $w$ seeds, \emph{i.e.} low-class users are provided the secret $\text{Key}(A^{(0)})$, class-$1$ users are provided $\text{Key}(A^{(1)}) = \left(\text{Key}(C^{(0)}), \text{Key}(A^{(0)})\right)$ up to high-class users with $\text{Key}(A^{(w-1)}) = \left(\text{Key}(C^{(w-2)}), \ \cdots \ , \text{Key}(C^{(0)}), \text{Key}(A^{(0)})\right)$. A sample network scenario is depicted in Fig. \ref{fig:network}. 

From the resources point of view, multiclass CS can be enabled with very small computational overhead. The encoding matrix generator is the same at both the encoder and high-class decoder side, whereas lower-class decoders may use the same generation scheme but are unable to rebuild the true encoding due to the missing private keys $\text{Key}(C^{(u)})$. 

The initial matrix $A^{(0)}$ is updated from a pseudorandom binary stream generated by expanding  $\text{Key}(A^{(0)})$ with a PRNG. The introduction of sign-flipping is a simple post-processing step carried out on the stream buffer by reusing the same PRNG architecture and expanding the corresponding $\text{Key}(C^{(u)})$, thus having minimal computational cost.

Since the values generated by this PRNG are never exposed, cryptographically secure generators may be avoided, provided that the period with which the matrices are reused is kept sufficiently large -- this requirement is crucial to avoid attacks that could exploit multiple plaintext-ciphertext pairs to fully or partially recover the encoding matrix.

\subsection{Lower-Class Recovery Performance Bounds}
In order to quantify the recovery quality performance gap between low- and high-class users receiving the same CS measurements from the network of Fig. \ref{fig:network}, we now provide performance bounds on the recovery error in the simple two-class case, starting from the basic intuition that if the sparsity basis of $x$ is not the canonical basis, then most plaintexts $x \notin \Ker{\Delta A}$, so the perturbation noise $\varepsilon =\Delta A x \neq 0$. 

\subsubsection{Second-Class Recovery Error -- Lower Bound}
\label{lbsection}
The following results aim at predicting the best-case recovery quality of any second-class decoder that assumes $y$ was encoded by $A^{(0)}$, whereas $y = A^{(1)} x$ in absence of other noise sources and regardless of the sparsity of $x$.

\begin{theorem}[Second-class recovery error lower bound]
\label{th1}
Let:
\begin{enumerate}
\item $A^{(0)}, A^{(1)}$ be $m\times n$  i.i.d. Bernoulli random matrices as in \eqref{eq:pert} and $\Delta A$ the sparse random perturbation matrix in \eqref{eq:spperturb} of density $\eta \leq \frac{1}{2}$;

\item $X$ be as in \EMone\, with finite $\mathscr{E}_x = \mathbb{E}[\sum^{n-1}_{j=0} X^2_j]$, $\mathscr{F}_x = \mathbb{E}[(\sum^{n-1}_{j=0} X^2_j)^2]$ and $Y = A^{(1)} X$ be the corresponding measurements' RV;
\end{enumerate}
For all $\theta \in (0, 1)$ and any instance $y = A^{(1)} x$, any $\hat{x}$ that satisfies $y = A^{(0)} \hat{x}$ is such that the recovery error norm 
\begin{equation}
\label{finitererrlb}
\|\hat{x} - x\|^2_2 \geq \dfrac{4 \eta m \mathscr{E}_x}{\sigma_ {\max}(A^{(0)})^2} \, \theta
\end{equation}
with probability
\begin{equation}
\textstyle \zeta=\frac{1}{1+(1-\theta)^{-2}\left\{\left[1+\frac{1}{m}\left(\frac{3}{2\eta}-1\right)\right]\frac{\mathscr{F}_x}{\mathscr{E}_x^2}-1\right\}}
\end{equation}
where $\sigma_{\max}(\cdot)$ denotes the maximum singular value of its argument.
\end{theorem}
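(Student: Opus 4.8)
The plan is to decouple the bound into a deterministic geometric step and a probabilistic second-moment step. First I would observe that any $\hat x$ obeying $y=A^{(0)}\hat x$ satisfies $A^{(0)}(\hat x-x)=y-A^{(0)}x=(A^{(1)}-A^{(0)})x=\Delta A\,x=\varepsilon$, so that $\|\varepsilon\|_2=\|A^{(0)}(\hat x-x)\|_2\le\sigma_{\max}(A^{(0)})\,\|\hat x-x\|_2$. This yields the key-independent inequality
\[
\|\hat x-x\|^2_2\ \ge\ \frac{\|\Delta A\,x\|^2_2}{\sigma_{\max}(A^{(0)})^2},
\]
which holds simultaneously for every admissible $\hat x$ and every realisation. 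It therefore suffices to prove that $Z:=\|\Delta A\,X\|^2_2$ exceeds $4\eta m\mathscr E_x\,\theta$ with probability at least $\zeta$, the randomness running over the signs of $A^{(0)}$, the support $C^{(0)}$, and $X$ (taken mutually independent).

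Writing $\varepsilon_j=-2\sum_l\mathbf 1[(j,l)\in C^{(0)}]\,A^{(0)}_{j,l}X_l$ and using that the signs are independent, zero-mean and $\pm1$-valued, all cross terms vanish under $\Exp_A[\,\cdot\mid C^{(0)},X]$, giving $\Exp_A[\varepsilon_j^2\mid C^{(0)},X]=4\sum_l\mathbf 1[(j,l)\in C^{(0)}]X_l^2$. Averaging over the uniform support of size $c=\eta mn$ (so that $\Pr[(j,l)\in C^{(0)}]=\eta$) and then over $X$ gives $\Exp[Z]=4\eta m\,\mathscr E_x$, exactly the prefactor of \eqref{finitererrlb}; the target event is thus $\{Z>\theta\,\Exp[Z]\}$.

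The hard part will be an upper bound on $\Exp[Z^2]=\sum_j\Exp[\varepsilon_j^4]+\sum_{j\ne j'}\Exp[\varepsilon_j^2\varepsilon_{j'}^2]$. For the diagonal terms the fourth moment of a $\pm1$ combination gives $\Exp_A[\varepsilon_j^4\mid C^{(0)},X]=16\big[3(\sum_l\mathbf 1[(j,l)\in C^{(0)}]X_l^2)^2-2\sum_l\mathbf 1[(j,l)\in C^{(0)}]X_l^4\big]$; averaging over $C^{(0)}$ brings in the without-replacement second moment $\rho=\tfrac{c(c-1)}{mn(mn-1)}\le\eta^2$ and yields $16\big[(\eta-3\rho)P+3\rho S^2\big]$, with $S=\sum_lX_l^2$ and $P=\sum_lX_l^4$. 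The obstacle is that $P$ is not controlled by $\mathscr E_x$ or $\mathscr F_x$; I would sidestep this by noting that the expression is affine in $P$ on $[0,S^2]$ and hence maximised at an endpoint, so it is at most $\max(16\eta,48\rho)\,S^2\le24\eta\,S^2$ --- and it is here that the hypothesis $\eta\le\tfrac12$ is needed, precisely to ensure $48\rho\le48\eta^2\le24\eta$. Thus $\Exp[\varepsilon_j^4]\le24\eta\,\mathscr F_x$. For $j\ne j'$ the two rows are independent, so $\Exp_A[\varepsilon_j^2\varepsilon_{j'}^2\mid C^{(0)},X]=16\big(\sum_l\mathbf 1[(j,l)\in C^{(0)}]X_l^2\big)\big(\sum_{l'}\mathbf 1[(j',l')\in C^{(0)}]X_{l'}^2\big)$, which averages to $16\rho S^2\le16\eta^2S^2$ and gives $\Exp[\varepsilon_j^2\varepsilon_{j'}^2]\le16\eta^2\mathscr F_x$. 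Summing the $m$ diagonal and $m(m-1)$ off-diagonal contributions, $\Exp[Z^2]\le24\eta m\,\mathscr F_x+16\eta^2m(m-1)\mathscr F_x$, whence $\Exp[Z^2]/\Exp[Z]^2\le\big[1+\tfrac1m(\tfrac{3}{2\eta}-1)\big]\tfrac{\mathscr F_x}{\mathscr E_x^2}$.

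Finally I would apply the one-sided Chebyshev (Cantelli) inequality to the nonnegative $Z$ with deviation $a=(1-\theta)\Exp[Z]>0$, which gives $\Pr[Z>\theta\Exp[Z]]\ge\big(1+(1-\theta)^{-2}(\Exp[Z^2]/\Exp[Z]^2-1)\big)^{-1}$. As this lower bound is decreasing in the normalised second moment, substituting the estimate of the previous step produces exactly $\Pr[Z>\theta\Exp[Z]]\ge\zeta$; combined with the deterministic reduction this proves \eqref{finitererrlb}. The single genuine difficulty is the second-moment estimate --- specifically, controlling the uncontrolled quartic term $\sum_lX_l^4$ through the affine/endpoint argument, which is also where $\eta\le\tfrac12$ enters decisively.
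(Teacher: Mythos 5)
Your proof is correct and follows essentially the same route as the paper's: the deterministic reduction $\|\hat x-x\|^2_2\geq \|\Delta A\,x\|^2_2/\sigma_{\max}(A^{(0)})^2$, a first- and second-moment computation of $\|\Delta A\,x\|^2_2$, the affine/endpoint bound that absorbs the quartic term (which is exactly where $\eta\leq\frac{1}{2}$ enters, matching the paper's step $3\eta(\mathscr{F}_x-\mathscr{G}_x)+\mathscr{G}_x\leq\frac{3}{2}\mathscr{F}_x$), and the one-sided Chebyshev inequality. The only cosmetic difference is that you condition on an exact-cardinality support $C^{(0)}$ and organise the moments row by row, whereas the paper models $\Delta A$ as a matrix with i.i.d.\ entries of variance $4\eta$; both give the same $\mathbb{E}[Z]$, the same second-moment bound, and hence the same $\zeta$.
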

\begin{coro}[Asymptotic case of Theorem 1]
Let:
\begin{enumerate}
\item $A^{(0)}, A^{(1)}, \Delta A, \eta$ be as in Theorem 1 as $m, n \rightarrow \infty, \frac{m}{n} \rightarrow q$;
\item $\Xone$ be as in \EMtwo\,, {$\alpha$-mixing \cite[(27.25)]{billingsley2008probability}}, with finite $\mathscr{W}_x = \lim_{n\rightarrow\infty} \frac{1}{n} \mathbb{E}[\sum^{n-1}_{j=0} X^2_j]$ and uniformly bounded $\mathbb{E}[X^4_j] \leq m_x$ for some $m_x > 0$. Denote with $\Yone$ the corresponding measurements' RP of instances $y$;
\end{enumerate}
For all $\theta \in (0, 1)$ and $\y = \frac{1}{\sqrt{n}} A^{(1)} x$, any $\hat{x}$ that satisfies $y = \frac{1}{\sqrt{n}} A^{(0)} \hat{x}$ is such that the recovery error power
 \begin{equation}
\label{rerrlb}
W_{\hat{x}-x} = \lim_{n \rightarrow \infty} \frac{1}{n} \sum^{n-1}_{j=0}(\hat{x}_j - x_j)^2 \geq \dfrac{4 \eta q \mathscr{W}_x}{(1 + \sqrt{{q}})^2} \, \theta \end{equation}
with probability $1$.
\end{coro}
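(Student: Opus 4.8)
The plan is to read this corollary as the $n\to\infty$ limit of Theorem \ref{th1}, after checking that the $\tfrac{1}{\sqrt n}$ normalisation in \EMtwo\ leaves the underlying geometric inequality untouched. First I would observe that any $\hat x$ with $y=\tfrac{1}{\sqrt n}A^{(0)}\hat x=\tfrac{1}{\sqrt n}A^{(1)}x$ satisfies $A^{(0)}(\hat x-x)=\Delta A\,x$, exactly as in the finite case, so that $\|\hat x-x\|_2\ge \|\Delta A\,x\|_2/\sigma_{\max}(A^{(0)})$ and the finite-dimensional bound \eqref{finitererrlb} holds verbatim for each finite-length plaintext $x^{(n)}$, with its probability $\zeta$ now indexed by $n$.

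Next I would divide \eqref{finitererrlb} by $n$ to pass from energy to power and regroup the right-hand side as
\begin{equation*}
\frac{1}{n}\|\hat x-x\|_2^2 \ge 4\eta\,\theta \,\frac{m}{n}\,\frac{\mathscr{E}_x}{n}\,\Big(\tfrac{1}{n}\sigma_{\max}(A^{(0)})^2\Big)^{-1}.
\end{equation*}
The three factors have explicit limits: $\tfrac{m}{n}\to q$ by hypothesis, $\tfrac{\mathscr{E}_x}{n}\to\mathscr{W}_x$ by the definition of the expected power, and $\tfrac{1}{n}\sigma_{\max}(A^{(0)})^2\to(1+\sqrt q)^2$ almost surely by the Bai--Yin edge law for the extreme singular values of an $m\times n$ i.i.d. matrix (applicable because the $\pm1$ entries trivially have finite fourth moment). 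Multiplying these limits reproduces the constant $4\eta q\,\mathscr{W}_x/(1+\sqrt q)^2$ appearing in \eqref{rerrlb}.

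It then remains to upgrade ``with probability $\zeta$'' into ``with probability $1$'', i.e. to show $\zeta\to1$. Since the prefactor $1+\tfrac{1}{m}\big(\tfrac{3}{2\eta}-1\big)\to1$, this reduces to $\mathscr{F}_x/\mathscr{E}_x^2\to1$, equivalently $\operatorname{Var}\!\big(\textstyle\sum_j X_j^2\big)/\big(\mathbb{E}\sum_j X_j^2\big)^2\to0$. This is precisely where the $\alpha$-mixing and uniformly bounded fourth moments of $\Xone$ enter: they make the covariances $\operatorname{Cov}(X_i^2,X_j^2)$ summable, so the numerator grows only as $O(n)$ while the denominator is $\Theta(n^2)$, and the ratio vanishes.

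I expect the main obstacle to be the clean assembly of a \emph{probability-$1$} statement, because \eqref{finitererrlb} rests on a one-sided (Cantelli/Paley--Zygmund) tail event over $(\Delta A,X)$ whereas the $\sigma_{\max}$ limit is almost sure over $A^{(0)}$. My argument would be that the slack $\theta\in(0,1)$ absorbs the vanishing fluctuations: as $n\to\infty$ both $\tfrac{1}{n^2}\|\Delta A\,x\|_2^2$ and $\tfrac1n\sigma_{\max}(A^{(0)})^2$ concentrate on their deterministic limits, so the inequality becomes deterministic and the exceptional set, of probability $1-\zeta\to0$, disappears. Making rigorous the summable-covariance bound furnished by the mixing hypothesis is the one genuinely technical point I would need to verify.
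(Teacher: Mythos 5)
Your proposal follows the paper's own route almost step for step: reuse the identity $A^{(0)}(\hat x - x)=\Delta A\,x$ from Theorem~\ref{th1}, normalise by $n$, send $\sigma_{\max}(A^{(0)})/\sqrt{n}\to 1+\sqrt q$ (the paper cites Geman's limit law rather than Bai--Yin, but these are interchangeable here), identify $\mathscr{E}_x/n\to\mathscr{W}_x$, and upgrade $\zeta\to 1$ by showing $\mathscr{F}_x/\mathscr{E}_x^2\to 1$ from the mixing hypothesis --- this last step is exactly the paper's Lemma~\ref{lemma1}. The one point where your argument overreaches is precisely the step you flagged as needing verification: $\alpha$-mixing in the sense of \cite[(27.25)]{billingsley2008probability} only guarantees $\alpha(h)\to 0$, not $\sum_h\alpha(h)<\infty$, so you are not entitled to claim that the covariances $\mathrm{Cov}(X_i^2,X_j^2)$ are summable or that $\mathrm{Var}(\sum_j X_j^2)=O(n)$. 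Fortunately the conclusion only needs the weaker fact that $\frac{1}{n^2}\sum_{i,j}|\mathrm{Cov}(X_i^2,X_j^2)|\le \frac{m_x}{n}+\frac{2}{n}\sum_{h=1}^{n-1}\alpha(h)\to 0$, and the Ces\`aro average $\frac{1}{n}\sum_{h<n}\alpha(h)$ vanishes whenever $\alpha(h)\to 0$ (the paper invokes Stolz--Ces\`aro for the divergent-sum case); substituting this for your summability claim closes the gap. On the final assembly of the probability-$1$ statement you are, if anything, more candid than the paper, which simply passes to the limit $\zeta\to 1$ without discussing how the Chebyshev-type tail event over $(\Delta A, X)$ interacts with the almost-sure singular-value limit; your observation that the slack $\theta$ absorbs the vanishing fluctuations is the right intuition and matches the level of rigour of the published argument.
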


The proof of these statements is given in Appendix \ref{app1}. {Simply put, Theorem 1 and Corollary 1 state that a second-class decoder recovering 
$\hat{x}$ such that $y = A^{(0)} \hat{x}$ is subject to a recovery error whose norm, with high probability, exceeds a quantity depending on the density $\eta$ of the perturbation $\Delta A$, the undersampling rate $\frac{m}{n}$ and the average energy $\mathscr{E}_x$ or power $\mathscr{W}_x$ respectively. 
In particular, the non-asymptotic case in \eqref{finitererrlb} is a probabilistic lower bound:
as a quantitative example, by assuming it holds with probability $\zeta = 0.98$ and that $\frac{\mathscr{F}_x}{\mathscr{E}^2_x} = 1.0001, n = 1024, m = 512, \sigma_{\max}(A^{(0)}) \approx \sqrt{m} + \sqrt{n}$ (see \cite{rudelson2010non}) one could take an arbitrary $\theta = 0.1 \Rightarrow \eta = 0.1594$ to obtain $\|\hat{x}-x\|^2_2 \geq 0.0109$ w.r.t. RVs having average energy $\mathscr{E}_x = 1$. In other words, with probability $0.98$ a perturbation of density $\eta = 0.1594$ will cause a minimum recovery error norm of $\unit[19.61]{dB}$.}

A stronger asymptotic result holding with probability $1$ on $W_{\hat{x}-x}$ is then reported in Corollary 1 under mild assumptions on the RP $\Xone$, where $\theta$ can be arbitrarily close to $1$ and only affecting the convergence rate to this lower bound.
The bounds in \eqref{finitererrlb} and \eqref{rerrlb} are adopted as reference best-cases in absence of other noise sources for the second-class decoder, which exhibits higher recovery error for most problem instances and reconstruction algorithms as detailed in Section \ref{section4}. 

\subsubsection{Second-Class Recovery Error -- Upper Bound} 
\label{ubsection} 
We now derive a second-class recovery error upper bound by applying the theory in \cite{herman2010general} which extends the well-known recovery guarantees in \cite{Candes_2006} to a perturbed encoding matrix model identical to \eqref{eq:pert}. While adaptations exist \cite{yang2012robustly} none tackle the unstructured, i.i.d. sparse random perturbation of \eqref{eq:spperturb} which we will now model.

The framework of \cite{herman2010general} analyses the recovery error upper bound when using $A^{(0)} = A^{(1)}-\Delta A$ as the encoding matrix in the reference \emph{basis pursuit with denoising} (BPDN) problem
\begin{equation}
\hat{s} = \mathop{\text{arg} \, \min}_{\xi \in \mathbb{R}^n} \|\xi\|_1 \ \text{s. t.} \ \|y - A^{(0)} D \xi \|_2 \leq \gamma \tag{$P_2$}
\label{eq:bpdn}
\end{equation}
for a given noise parameter $\gamma \geq \|\varepsilon\|_2$. 
Let $\sigma^{(k)}_{\min/\max}(\cdot)$ denote the extreme singular values among all $k$-column submatrices of a matrix, and define the perturbation-related constants
\begin{equation}
\epsilon^{(k)}_{A^{(1)}} \geq \textstyle\frac{\sigma^{(k)}_{\max}(-\Delta A D)}{\sigma^{(k)}_{\max}(A^{(1)} D)}; \
\epsilon_{A^{(1)}} \geq \frac{\sigma_{\max}(-\Delta A D)}{\sigma_{\max}(A^{(1)} D)} \geq  \epsilon^{(k)}_{A^{(1)}}
\label{eq:strohmer}
\end{equation}
and the well-known \emph{restricted isometry constant} (RIC, \cite{candes2008restricted}) $\delta^{(k)} = \max{\{\sigma^{(k)}_{\max}(A^{(1)} D)^2 - 1, 1 - \sigma^{(k)}_{\min}(A^{(1)} D)^2\}}$. We estimate and plug these quantities in the upper bound of \cite[Theorem 2]{herman2010general}. For the sake of simplicity, we report it below in the case of plaintexts $x$ having exactly $k$-sparse representations in absence of other noise sources.
\begin{props}[Second-class recovery error upper bound, adapted from \cite{herman2010general}]
\label{ubprops}
Let 
\begin{enumerate}
\item $A^{(0)}, A^{(1)}, \Delta A, \eta$ be as in Theorem \ref{th1};
\item $\Xino$ be as in \EMone, with $x = D s$, $D$ an ONB and $s$ being $k$-sparse;
\item $\epsilon^{(2 k)}_{A^{(1)}} < 2^{\frac{1}{4}} - 1$ and $\delta^{(2 k)} < \delta^{(2 k)}_{\max} = \sqrt{2} (1 + \epsilon^{(2 k)}_{A^{(1)}})^{-2} - 1$;
\end{enumerate}
For any instance $y = A^{(1)} x$, a vector $\hat{x} = D \hat{s}$ with $\hat{s}$ the solution of \eqref{eq:bpdn} with noise parameter $\gamma = \epsilon^{(k)}_{A^{(1)}} \sqrt{\textstyle\frac{1+\delta^{(k)}}{1-\delta^{(k)}}}\|y\|_2$ obeys \cite{herman2010general}
\begin{equation}
\|\hat{x}-x\|_2 \leq \textstyle{\overline{C} \gamma}, \textstyle{\overline{C} = \frac{4 \sqrt{1 + \delta^{(2 k)}} (1 + \epsilon^{(2 k)}_{A^{(1)}})}{1 - (\sqrt{2} + 1) \left[(1 + \delta^{(2 k)}) (1 + \epsilon^{(2 k)}_{A^{(1)}})^2 - 1\right]}} 
\label{eq:ubrip}
\end{equation}
\end{props}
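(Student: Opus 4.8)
The plan is to recognise that second-class recovery is an instance of the completely-perturbed compressed sensing model of \cite{herman2010general}, so that the statement follows from \cite[Theorem 2]{herman2010general} once its hypotheses are matched and, crucially, once the denoising level $\gamma$ is justified. In the correspondence, the true measurement operator is $A^{(1)} D$, the operator actually used in \eqref{eq:bpdn} at the second-class decoder is $A^{(0)} D = (A^{(1)} - \Delta A) D$, and the matrix perturbation is $-\Delta A D$; the relative perturbation parameters of \eqref{eq:strohmer} and the RIC $\delta^{(2k)}$ are defined precisely to coincide with their quantities. Hypothesis (3), i.e. $\epsilon^{(2k)}_{A^{(1)}} < 2^{\frac{1}{4}} - 1$ and $\delta^{(2k)} < \delta^{(2k)}_{\max}$, is exactly the admissibility condition of their theorem: rearranging $\delta^{(2k)} < \sqrt{2}(1+\epsilon^{(2k)}_{A^{(1)}})^{-2} - 1$ gives $(1+\delta^{(2k)})(1+\epsilon^{(2k)}_{A^{(1)}})^2 < \sqrt{2}$, so that the \emph{perturbed} restricted isometry constant stays below the classical $\sqrt{2}-1$ of \cite{Candes_2006}, to which the bound reduces when $\epsilon^{(2k)}_{A^{(1)}} = 0$.

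The step I would carry out in detail is verifying that the prescribed $\gamma$ keeps the true representation $s$ feasible for \eqref{eq:bpdn}, which is what \cite{herman2010general} requires and what pins down $\gamma$. Because $y = A^{(1)} D s$ and the recovery operator is $A^{(0)} D$, the residual at $s$ is exactly $\|y - A^{(0)} D s\|_2 = \|\Delta A D s\|_2 = \|\varepsilon\|_2$. Since $s$ is $k$-sparse, $(\Delta A D) s$ activates only the $k$ columns of $\Delta A D$ indexed by the support of $s$, hence $\|\varepsilon\|_2 \leq \sigma^{(k)}_{\max}(-\Delta A D)\,\|s\|_2 \leq \epsilon^{(k)}_{A^{(1)}}\,\sigma^{(k)}_{\max}(A^{(1)} D)\,\|s\|_2 \leq \epsilon^{(k)}_{A^{(1)}}\sqrt{1+\delta^{(k)}}\,\|s\|_2$, using \eqref{eq:strohmer} and the definition of $\delta^{(k)}$. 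Dually, $\|y\|_2 = \|A^{(1)} D s\|_2 \geq \sigma^{(k)}_{\min}(A^{(1)} D)\,\|s\|_2 \geq \sqrt{1-\delta^{(k)}}\,\|s\|_2$, so $\|s\|_2 \leq \|y\|_2 / \sqrt{1-\delta^{(k)}}$. Chaining the two gives $\|\varepsilon\|_2 \leq \epsilon^{(k)}_{A^{(1)}}\sqrt{\tfrac{1+\delta^{(k)}}{1-\delta^{(k)}}}\,\|y\|_2 = \gamma$, so $s$ is feasible and the stated $\gamma$ is the natural, a~priori computable upper bound on the unknown $\|\varepsilon\|_2$.

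With feasibility and hypothesis (3) in hand, I would invoke \cite[Theorem 2]{herman2010general} and specialise it to the noiseless, exactly $k$-sparse regime, so that the generic compressibility term $\|s - s_k\|_1/\sqrt{k}$ drops out and only the $\gamma$-proportional term remains, yielding $\|\hat{x} - x\|_2 = \|D(\hat{s} - s)\|_2 = \|\hat{s} - s\|_2 \leq \overline{C}\gamma$ with $\overline{C}$ as in \eqref{eq:ubrip}, the last equality using that $D$ is an ONB. I expect the only real difficulty to be a faithful translation between the two notations and the bookkeeping of the order-$k$ versus order-$2k$ constants: the feasibility estimate for $\gamma$ is driven by the order-$k$ quantities $\epsilon^{(k)}_{A^{(1)}}, \delta^{(k)}$, whereas the stability constant $\overline{C}$ and its admissibility threshold are governed by the order-$2k$ quantities $\epsilon^{(2k)}_{A^{(1)}}, \delta^{(2k)}$, and one must keep these roles distinct while checking that the hypotheses of \cite[Theorem 2]{herman2010general} are met verbatim under the present definitions.
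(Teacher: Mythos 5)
Your proposal is correct and follows the same route as the paper, which proves Proposition~\ref{ubprops} simply by specialising \cite[Theorem 2]{herman2010general} to the noiseless, exactly $k$-sparse case with perturbation $-\Delta A D$ and relative constants as in \eqref{eq:strohmer}. You additionally spell out the feasibility argument pinning down $\gamma$ (via $\|\Delta A D s\|_2 \leq \epsilon^{(k)}_{A^{(1)}}\sqrt{1+\delta^{(k)}}\,\|s\|_2$ and $\|s\|_2 \leq \|y\|_2/\sqrt{1-\delta^{(k)}}$) and the positivity of the denominator of $\overline{C}$, details the paper leaves implicit in the citation; both are accurate.
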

Such a guarantee depends on $\epsilon^{(k)}_{A^{(1)}}, \epsilon^{(2 k)}_{A^{(1)}}$: theoretical results exist for estimating their value by bounding the maximum singular values in \eqref{eq:strohmer} since the entries of $A^{(1)}$ and $\Delta A$ are i.i.d. (in particular, \cite{rudelson2010non} applies to $A^{(1)}$, \cite{latala2005some} to $\Delta A$). Yet, they would hold only when $D$ is the identity and involve universal constants whose values would require numerical evaluation. For these reasons we choose to estimate the required quantities directly by Monte Carlo simulation.
\begin{figure}[t]	
\centering
	\subfloat[\label{fig:evalues}Empirical values of $\epsilon^{(k)}_{A^{(1)}}$]{\includegraphics[clip=true, trim = 0 0 0 5pt, width = 1.6in]{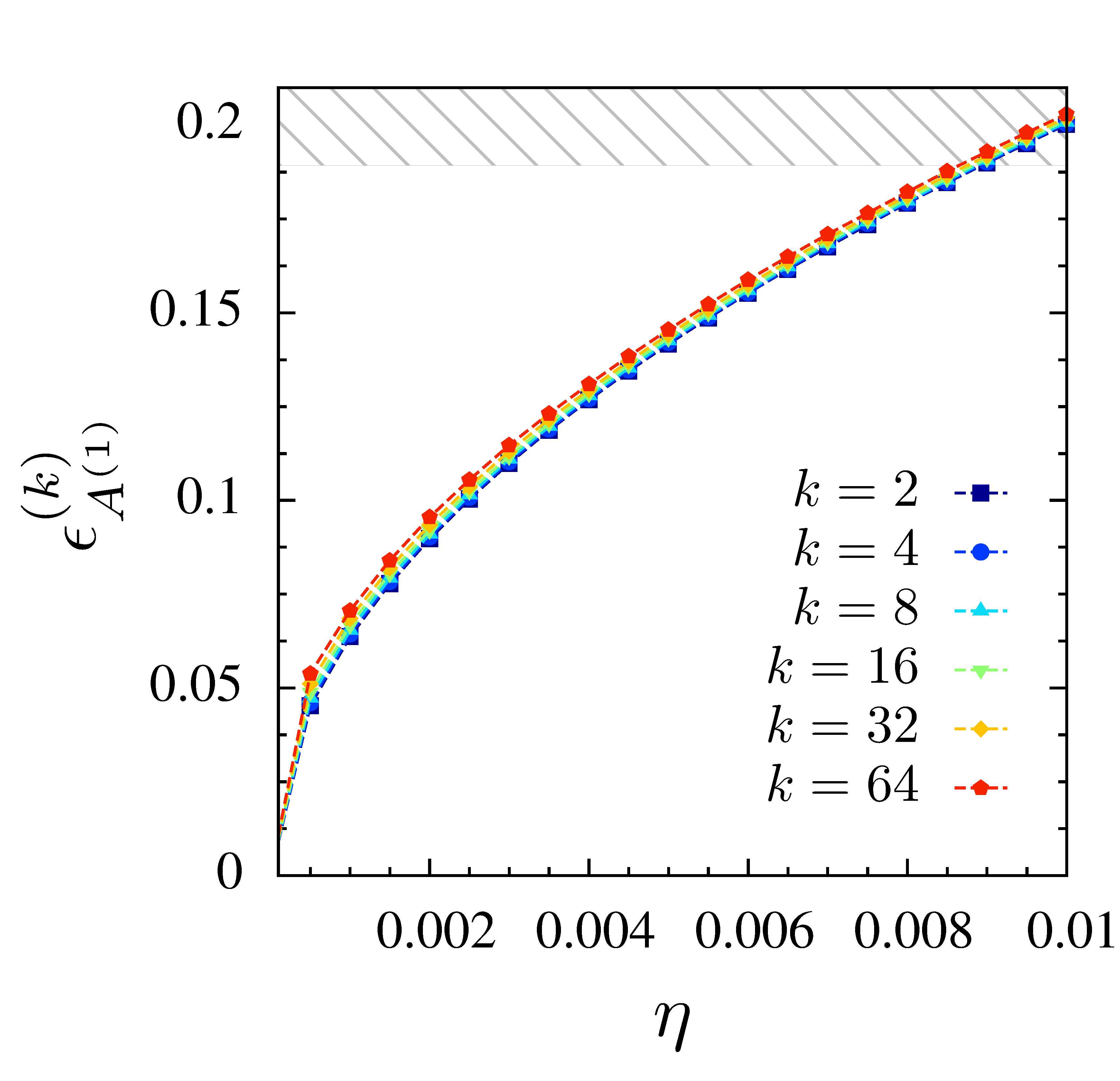}}
	\hfill
	\subfloat[\label{fig:ripc}Maximum allowed values of $\delta^{(2 k)}$]{\includegraphics[clip=true, trim = 0 0 0 5pt, width = 1.6in]{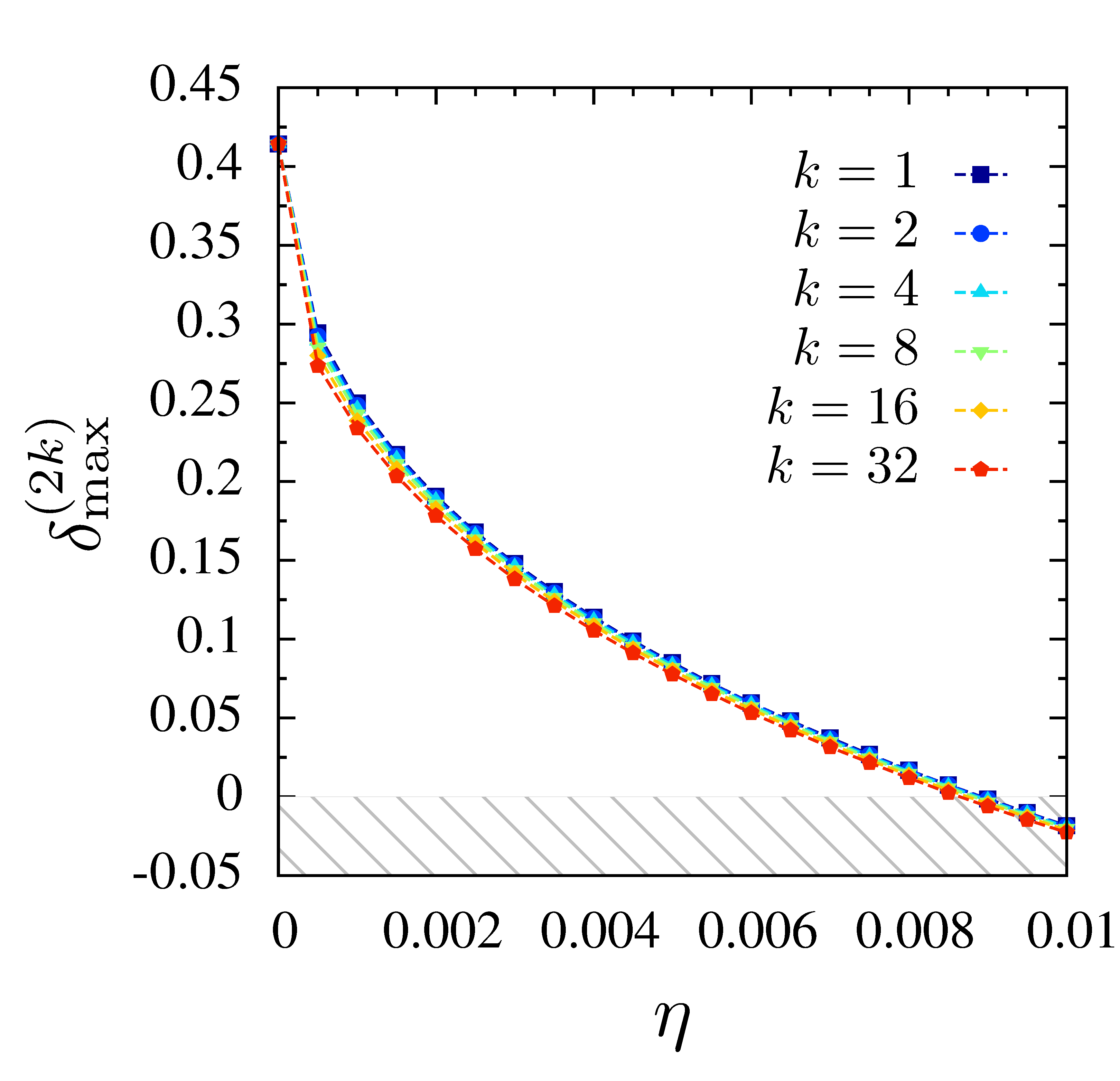}}
	\caption{Empirical evaluation of the constants in Proposition \ref{ubprops} based on a large number of $A^{(1)}, \Delta A$ with $m = 512, \eta \in [5\cdot 10^{-4},10^{-2}]$ and $D$ a random ONB.}
	\label{fig:strohmerfwk}
\end{figure}
As an example, we calculate \eqref{eq:strohmer} for $10^4$ instances of submatrices of $A^{(1)}$ and $\Delta A$ with $m = 512, k = 2, 4, \ldots, 64$ and $\eta \in [5\cdot 10^{-4},10^{-2}]$. This allows us to find typical values of ${\epsilon}^{(k)}_{A^{(1)}}$ as reported in Fig. \ref{fig:evalues}. 
In the same setting ${\epsilon}^{(k)}_{A^{(1)}} < 2^{\frac{1}{4}} - 1$ only when $\eta \leq 8 \cdot 10^{-3}$. 
In Fig. \ref{fig:ripc} we report the corresponding range of allowed RIC $\delta^{(2 k)} \leq \delta^{(2k)}_{\max}$ that comply with Proposition \ref{ubprops}, \emph{i.e.} the RIC constraints the encoding matrices must meet so that \eqref{eq:ubrip} holds. 

Such RIP-based analyses provide very strong sufficient conditions for signal recovery (see \cite[Section X]{Donoho_2010}) which in our case result in establishing a formal upper bound for a small range of $\eta$ and when solving \eqref{eq:bpdn}. {As observed by the very authors of \cite{herman2010general}, typical recovery errors are substantially smaller than this upper bound. We will therefore rely on another less rigorous, yet practically effective least-squares approach using the same hypotheses of Theorem 1 to bound the average recovery quality performances in Section \ref{section4}.}
\section{A Cryptanalysis of Compressed Sensing}
\label{section2}
Consider a generic CS scheme as in Section \ref{prelim} with $y = A x$ linearly encoding a plaintext $x$ into a ciphertext $y$. We now investigate the security properties and limits of such random linear measurements by letting $x, y$ be realisations of either RVs \EMone\, or RPs \EMtwo\, with their respective \emph{a priori} distributions as in the classic Shannon framework \cite{Shannon_SecrecySystems_1949}.

\subsection{Security Limits}
The encoding performed by CS is a linear mapping, and as such it cannot completely hide the information contained in a plaintext $x$. This has two main consequences: firstly, linearity propagates scaling. Hence, it is simple to distinguish a plaintext $x'$ from another $x''$ if one knows that $x''=\alpha x'$ for some scalar $\alpha$. For the particular choice $\alpha = 0$ this leads to a known argument \cite[Lemma 1]{Allerton_2008} against the fundamental requirement for perfect secrecy that the conditional PDF $f_{Y|X} (y|x) = f_{Y}(y)$ (\emph{e.g.} in model \EMone). In the following, we will prove that a scaling factor is actually all that can be inferred from the statistical analysis of CS-encoded ciphertexts.

Secondly, linearity implies continuity. Hence, whenever $x'$ and $x''$ are close to each other for a fixed $A$, the corresponding $y'$ and $y''$ will also be close to each other. This fact goes against the analog version of the \emph{diffusion} (or \emph{avalanche effect}) requirement for digital-to-digital ciphers \cite{washington2002introduction}. If the encoding process did not entail a dimensionality reduction, this fact could be exploited every time a plaintext-ciphertext pair $x', y'$ is known. If a new ciphertext $y''$ is available that is close to $y'$, then it is immediately known that the corresponding plaintext $x''$ must be close to $x'$ thus yielding a good starting point for \emph{e.g.} a brute-force attack. 

The fact that $m < n$ slightly complicates this setting since the counterimages of $y''$ through $A$ belong to a whole subspace in which points arbitrarily far from $x'$ exist in principle. Yet, encoding matrices $A$ are chosen by design so that the probability of their null space aligning with $x'$ and $x''$ (that are $k$-sparse w.r.t. a certain $D$) is overwhelmingly small \cite{candes2008restricted}.
Hence, even if with some relaxation from the quantitative point of view, neighbouring ciphertexts strongly hint at neighbouring plaintexts. As an objection to this seemingly unavoidable issue note that the previous argument only holds when the encoding matrix remains the same for both plaintexts, while by our assumption (Section \ref{cryptopersp}) on the very large period of the generated sequence of encoding matrices two neighbouring plaintexts $x', x''$ will most likely be mapped by different encoding matrices to non-neighbouring ciphertexts $y', y''$.

\subsection{Achievable Security Properties}
\subsubsection{Asymptotic Security}
While perfect secrecy is unachievable, we may introduce the notion of \emph{asymptotic spherical secrecy} and show that CS with universal random encoding matrices has this property, \emph{i.e.} no information can be inferred on a plaintext $x$ in model \EMtwo\, from the statistical properties of all its possible ciphertexts but its power. The implication of this property is the basic guarantee that a malicious eavesdropper intercepting the measurement vector will not be able to extract any information on the plaintext except for its power.

\begin{defn}[Asymptotic spherical secrecy]
Let $\Xone$ be a RP whose plaintexts have finite power $0 < W_x < \infty$, $\Yone$ be a RP modelling the corresponding ciphertexts. A cryptosystem has asymptotic spherical secrecy if for any of its plaintexts $x$ 
 and ciphertexts $y$ we have
\begin{equation}
f_{\Yone|\Xone}\left(y|x\right) \mathop{\longrightarrow}_{\mathcal{D}} f_{\Yone|W_x}(y)\label{eq:ASC}\end{equation}
where $\displaystyle\mathop{\rightarrow}_{\mathcal{D}}$ denotes convergence in distribution as $m, n\rightarrow\infty$, $f_{\Yone|W_x}$ denotes conditioning over plaintexts $x$ with the same power $W_x$.
\end{defn}

From an eavesdropper's point of view, asymptotic spherical secrecy means that given any ciphertext $y$ we have \[f_{\Xone|\Yone}\left(x|y\right) \simeq \frac{f_{\Yone|W_x}(y)}{f_{\Yone}(y)} f_{\Xone}(x)\]
implying that any two different plaintexts with an identical, prior and equal power $W_x$ remain approximately indistinguishable from their ciphertexts. In the asymptotic setting, the following proposition holds.

\begin{props}[Asymptotic spherical secrecy of random measurements]
\label{cltasymp}
Let $\Xone$ be a RP with bounded-value plaintexts of finite power $W_x$, $\Yino_j$ any variable of the RP $\Yone$ as in \EMtwo. For $n \rightarrow \infty$ we have
\begin{equation}
f_{\Yino_j|\Xone}(y_j) \mathop{\longrightarrow}_{\mathcal{D}} \mathcal{N}\left(0, W_x\right)
\label{eq:clt}
\end{equation} 
Thus, universal encoding matrices provide independent, asymptotically spherical-secret measurements as in \eqref{eq:ASC}.
\end{props}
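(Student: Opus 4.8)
The plan is to condition on a realisation $X = x$ and recognise $\Yino_j = \frac{1}{\sqrt{n}}\sum_{l=0}^{n-1} A_{j,l} x_l$ as a normalised sum of the independent, bounded random variables $A_{j,l} x_l$, so that the claim reduces to a central limit theorem. Since the summands depend on $n$ through the factor $\frac{1}{\sqrt{n}}$, this is a triangular-array CLT rather than the classical i.i.d.\ one. First I would record the first two moments: because each $A_{j,l}$ is symmetric Bernoulli with $\mathbb{E}[A_{j,l}] = 0$ and $A_{j,l}^2 = 1$, every term has zero mean and the conditional variance is $\mathrm{Var}(\Yino_j \mid x) = \frac{1}{n}\sum_{l=0}^{n-1} x_l^2 \to W_x$ by the definition of power in \EMtwo. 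This already fixes the target variance $W_x$; it remains to establish asymptotic normality.

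For the distributional convergence I would exploit the Bernoulli structure directly through characteristic functions. Conditioned on $x$, independence of the entries $\{A_{j,l}\}_l$ gives
\[
\phi_{\Yino_j}(t) = \prod_{l=0}^{n-1} \mathbb{E}\!\left[e^{\iu t A_{j,l} x_l/\sqrt{n}}\right] = \prod_{l=0}^{n-1}\cos\!\left(\frac{t x_l}{\sqrt{n}}\right),
\]
so that $\log \phi_{\Yino_j}(t) = \sum_{l=0}^{n-1}\log\cos\!\big(t x_l/\sqrt{n}\big)$. Using the boundedness hypothesis $|x_l|\le M$, the argument $t x_l/\sqrt{n}$ is uniformly small for large $n$, and the expansion $\log\cos u = -\tfrac{u^2}{2} + O(u^4)$ yields $\log\phi_{\Yino_j}(t) = -\frac{t^2}{2}\cdot\frac{1}{n}\sum_l x_l^2 + O\!\big(\tfrac{1}{n}\big) \to -\frac{t^2 W_x}{2}$, the log-characteristic function of $\mathcal{N}(0,W_x)$. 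L\'evy's continuity theorem then gives \eqref{eq:clt}. An equivalent route is to verify the Lyapunov condition for the array: with exponent $2+\delta = 4$ one has $\sum_l \mathbb{E}\big[|A_{j,l} x_l/\sqrt{n}|^4\big] = \frac{1}{n^2}\sum_l x_l^4 \le M^4/n \to 0$ while the variance tends to $W_x^2 > 0$, so the Lyapunov CLT applies and delivers the same limit.

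To close the ``independent, asymptotically spherical-secret'' conclusion I would note two facts. Independence across $j$ is immediate: distinct measurements $\Yino_j, \Yino_{j'}$ are built from the disjoint, independent row blocks $\{A_{j,l}\}_l$ and $\{A_{j',l}\}_l$ of the i.i.d.\ Bernoulli matrix, so conditionally on $x$ they are independent, and this holds before passing to the limit. Spherical secrecy then follows because the limit law $\mathcal{N}(0,W_x)$ depends on $x$ only through its power $W_x$: conditioning on $x$ and conditioning on $W_x$ yield the same limiting marginal, and combining the factorised independence with this common marginal gives $f_{\Yone|\Xone}(y|x)\to f_{\Yone|W_x}(y)$, which is exactly \eqref{eq:ASC}.

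The only delicate point is the triangular-array nature of the sum: the classical CLT cannot be invoked verbatim, and a Lindeberg/Lyapunov-type condition must be checked. The boundedness of the plaintext entries makes this essentially automatic, since the per-term contributions vanish uniformly at rate $n^{-1/2}$, so I expect the remaining work to be bookkeeping the remainder in the expansion of $\log\cos$ and confirming $W_x > 0$ so that the limiting variance is non-degenerate. I would also remark that the boundedness assumption could be relaxed to a uniform fourth-moment bound, mirroring the hypothesis $\mathbb{E}[X_j^4]\le m_x$ already used in the setting of Corollary \ref{th1}.
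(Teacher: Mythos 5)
Your proposal is correct and rests on the same core idea as the paper's proof: condition on a plaintext $x$, view $Y_j$ as the row sum of the triangular array $Z_{j,l}=A_{j,l}x_l/\sqrt{n}$, note that the conditional variance $\frac{1}{n}\sum_l x_l^2$ tends to $W_x$, and invoke a non-identically-distributed CLT, with boundedness of the $x_l$ making the negligibility requirement automatic. The only difference is the technical verification: the paper appeals to the Lindeberg--Feller theorem and checks the uniform-negligibility condition $\max_l \mathbb{E}[Z_{j,l}^2]/\mathbb{E}[(S_j^{(n)})^2]\to 0$ via $x_l^2\le M_x$ and $W_x>0$, whereas you either compute the conditional characteristic function $\prod_l\cos(tx_l/\sqrt{n})$ directly and expand $\log\cos$, or check Lyapunov with fourth moments. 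Both routes are valid; your characteristic-function argument is more self-contained and even quantifies the $O(1/n)$ error in the log-characteristic function (nicely foreshadowing Proposition 3), but it is tied to the Bernoulli ensemble, while the Lindeberg/Lyapunov route (which the paper uses, and which you also sketch) extends verbatim to general zero-mean, unit-variance subgaussian entries, consistent with the ``universal encoding matrices'' claim. Two cosmetic points: in the Lyapunov step it is $s_n^{4}=(\mathrm{Var})^2\to W_x^2$, not the variance itself, that tends to $W_x^2$; and the expansion of $\log\cos$ requires $|t x_l/\sqrt{n}|<\pi/2$, which boundedness guarantees only for $n$ large enough at each fixed $t$ --- worth stating explicitly. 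Your closing remarks on independence across rows and on the limit law depending on $x$ only through $W_x$ match the discussion the paper places immediately after the proposition.
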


The proof of this statement is given in Appendix \ref{acsproofs}. Since the rows of $A$ are independent, the measurements conditioned only on $W_x$ are also independent and Proposition \ref{cltasymp} asserts that, although not secure in the Shannon sense, CS with suitable encoding matrices is able to conceal the plaintext up to the point of guaranteeing its security for $n\rightarrow\infty$. 

As a more empirical illustration of spherical secrecy for finite $n$, we consider an attack aiming at distinguishing two orthogonal plaintexts $x'$ and $x''$ from their encryption (clearly, finite energy must be assumed as in \EMone). The attacker has access to a large number $\chi$ of ciphertexts collected in a set $\mathcal{Y}'$ obtained by applying different, randomly chosen encoding matrices to a certain $x'$ as in \eqref{measures}.
Then, the attacker collects another set $\mathcal{Y}''$ of $\chi$ ciphertexts, all of them corresponding either to $x'$ or to $x''$, and must tell which is the true plaintext between the two. 
\begin{figure}[t]	
\centering
	\subfloat[\label{fig:KS1}$e_{x'}=e_{x{''}}=1$; uniformity test $p$-value $=0.4775$ implies uniformity at $5\%$ significance.]{\includegraphics[clip=true, trim = 5 0 0 5, width = 1.6in]{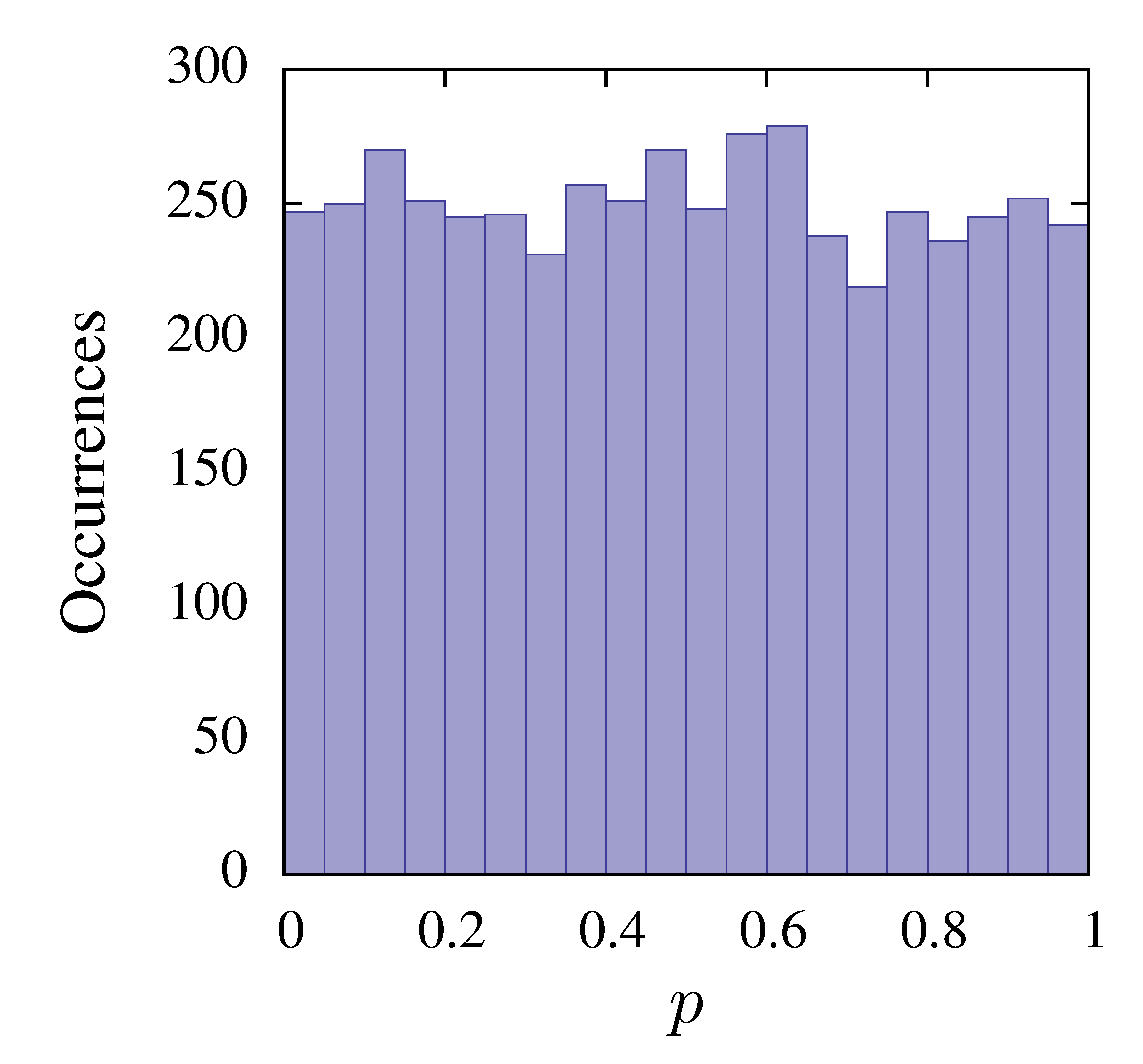}}
	\hspace{1mm}
	\subfloat[\label{fig:KS2}$e_{x'}=1, \, e_{x''}=1.01$; uniformity test $p$-value $\simeq 0$ implies non-uniformity.]{\includegraphics[clip=true, trim = 5 0 0 5, width = 1.6in]{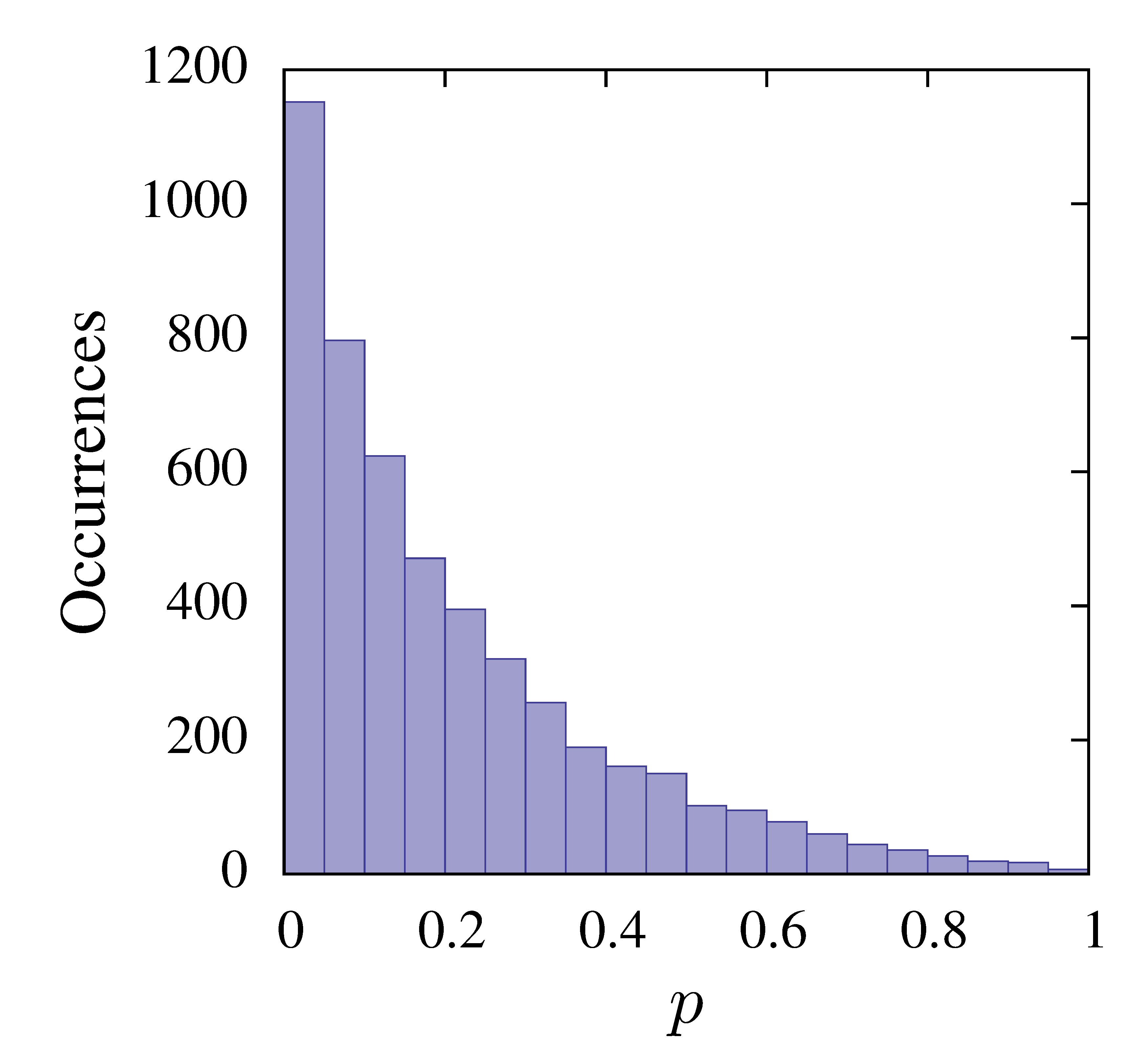}}	
	\caption{Outcome of second-level statistical tests to distinguish between two orthogonal plaintexts $x', x''$. In (a) $x', x''$ have $e_{x'} = e_{x''}$, spherical secrecy applies and the uniform distribution of $p$-values shows that the corresponding ciphertexts are statistically indistinguishable. In (b) $x', x''$ have $e_{x'} \neq e_{x''}$, spherical secrecy does not apply and the distribution of $p$-values shows that the corresponding ciphertexts are distinguishable.}
	\label{fig:secondlevelKS}
\end{figure}
This reduces the attack to an application of statistical hypothesis testing, the null assumption being that the distribution underlying the samples in $\mathcal{Y}''$ is the same as that underlying the samples in $\mathcal{Y}'$. For maximum reliability we adopt a two-level testing approach: we repeat the above experiment for many instances of random orthogonal plaintexts $x'$ and $x''$, performing a two-way Kolmogorov-Smirnov (KS) test to compare the empirical distributions obtained from $\mathcal{Y}'$ and $\mathcal{Y}''$ produced by such orthogonal plaintexts. 

Each of the above tests yields a $p$-value quantifying the probability that two data sets coming from the same distribution exhibit larger differences w.r.t. those at hand.
Given their meaning, individual $p$-values could be compared against a desired significance level to give a first assessment whether the null hypothesis (\emph{i.e.} equality in distribution) can be rejected. Yet, since it is known that $p$-values of independent tests on distributions for which the null assumption is true must be uniformly distributed in $[0,1]$ we collect $P$ of them and feed this second-level set of samples into a one-way KS test to assess uniformity at the standard significance level $5\%$.

This testing procedure is done for $n=256$ in the cases $e_{x'} = e_{x''}=1$ (same energy plaintexts) and $e_{x'}=1, e_{x''}=1.01$, \emph{i.e.} with a $1\%$ difference in energy between the two plaintexts. The resulting $p$-values for $P=5000$ are computed by matching pairs of sets containing $\chi=5 \cdot 10^5$ ciphertexts, yielding the $p$-value histograms depicted in Figure \ref{fig:secondlevelKS}.
We report the histograms of the $p$-values in the two cases along with the $p$-value of the second-level assessment, \emph{i.e.} the probability that samples from a uniform distribution exhibit a deviation from a flat histogram larger than the observed one. When the two plaintexts have the same energy, all evidence concurs to say that the ciphertext distributions are statistically indistinguishable. In the second case, even a small difference in energy causes statistically detectable deviations and leads to a correct inference of the true plaintext between the two.

\subsubsection{Non-Asymptotic Security}
We have observed how asymptotic spherical secrecy has finite $n$ effects (for additional evidence by numerical computation of the Kullback-Leibler divergence see \cite{cambareri2013twoclass}). From a more formal point of view, we may evaluate the convergence rate of \eqref{eq:clt} for finite $n$ to obtain some further guarantee that an eavesdropper intercepting the measurements will observe samples of an approximately Gaussian RV bearing very little information in addition to the energy of the plaintext. We hereby consider $\Xino$ a RV as in \EMone, for which a plaintext $x$ of energy $e_x$ lies on the sphere $S^{n-1}_{e_x}$ of $\Rn$ (with radius $\sqrt{e_x}$).

The most general convergence rate for sums of i.i.d. random variables is given by the well-known Berry-Esseen Theorem \cite{berry1941accuracy} as $O\left(n^{-\frac{1}{2}}\right)$. In our case we apply a recent, remarkable result of \cite{klartag2012variations} that improves and extends this convergence rate to inner products of i.i.d. RVs (\emph{i.e.} any row of $A$) and vectors (\emph{i.e.} plaintexts $x$) uniformly distributed on $S^{n-1}_{e_x}$.  

\begin{props}[Rate of convergence of random measurements]
\label{ROC}
Let $\Xino, Y$ be RVs as in \EMone\, with $A$ a random matrix of i.i.d. zero mean, unit variance, finite fourth moment entries. For any $\rho\in(0,1)$, there exists a subset $\mathcal{F} \subseteq S^{n-1}_{e_x}$ with probability measure $\sigma^{n-1}(\mathcal{F}) \geq 1-\rho$ such that if $x \in \mathcal{F}$ then all $Y_j$ in $Y$ verify
\begin{equation}
\mathop{\text{sup}}_{\substack{\alpha < \beta}} \left| \int^\beta_{\alpha} f_{Y_j|X}(\nu|x) \dd\nu - \frac{1}{\sqrt{2 \pi}} \int^\beta_{\alpha} e^{-\frac{t^2}{2 e_x}} \dd t\right|  \leq \dfrac{C(\rho)}{n}\label{klartagth}\end{equation}
for $C(\rho)$ a non-increasing function of $\rho$.
\end{props}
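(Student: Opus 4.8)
The plan is to reduce the statement to the sharpened Berry--Esseen estimate of \cite{klartag2012variations}, which is exactly a bound on how fast the inner product between a fixed vector of i.i.d.\ entries and a \emph{typical} direction on the sphere approaches a Gaussian law. The first step is a normalisation: since $x \in S^{n-1}_{e_x}$ means $\|x\|_2 = \sqrt{e_x}$, I would write $x = \sqrt{e_x}\,\theta$ with $\theta \in S^{n-1}$ a unit vector, so that each measurement $Y_j = \sum_{l=0}^{n-1} A_{j,l} x_l = \sqrt{e_x}\,\langle a_j, \theta\rangle$, where $a_j$ denotes the $j$-th row of $A$. Because the entries of $a_j$ are i.i.d.\ with zero mean, unit variance and finite fourth moment, and $\|\theta\|_2 = 1$, the conditional variance of $Y_j$ is exactly $e_x$; this reduces the problem to controlling the distance between the law of $\langle a_j, \theta\rangle$ and $\mathcal{N}(0,1)$, with the rescaling by $\sqrt{e_x}$ matching the target Gaussian $\mathcal{N}(0, e_x)$ appearing in \eqref{klartagth}.

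The second step relates the metric in \eqref{klartagth} to the Kolmogorov distance. Writing $F_j$ for the conditional CDF of $Y_j$ given $X=x$ and $\Phi_{e_x}$ for the CDF of $\mathcal{N}(0, e_x)$, the left-hand side of \eqref{klartagth} equals $\sup_{\alpha<\beta}\bigl|[F_j(\beta)-\Phi_{e_x}(\beta)] - [F_j(\alpha)-\Phi_{e_x}(\alpha)]\bigr|$, i.e.\ the oscillation of $F_j - \Phi_{e_x}$, which is at most $2\,d_K(Y_j, \mathcal{N}(0,e_x))$ where $d_K$ is the Kolmogorov distance. Hence it suffices to bound this Kolmogorov distance by $\tfrac{1}{2}C(\rho)/n$, and by the scaling of the previous step this is equivalent to bounding $d_K(\langle a_j,\theta\rangle, \mathcal{N}(0,1))$.

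The third step imports \cite{klartag2012variations}: for i.i.d.\ entries with the stated moments, the \emph{average} Kolmogorov distance over the sphere admits a bound of the form $\int_{S^{n-1}} d_K(\langle a_j, \theta\rangle, \mathcal{N}(0,1))\, \dd\sigma^{n-1}(\theta) \leq C_0/n$, with $C_0$ depending only on the fourth moment of the entries --- this is precisely the improvement from the generic rate $O(n^{-1/2})$ to $O(n^{-1})$ for typical directions. Applying Markov's inequality to this integral, for any $\rho \in (0,1)$ the set of directions on which $d_K > C_0/(\rho n)$ has measure strictly below $\rho$; its complement is a set with measure $\geq 1-\rho$ on which $d_K \leq C_0/(\rho n)$. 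Taking $C(\rho) = 2C_0/\rho$, which is manifestly non-increasing in $\rho$, and pulling this good set back through $x = \sqrt{e_x}\,\theta$ (a dilation, hence measure-preserving for the normalised surface measure, so the resulting $\mathcal{F} \subseteq S^{n-1}_{e_x}$ keeps the same measure) yields \eqref{klartagth}. Since all rows $a_j$ share the same law, the distribution of $\langle a_j,\theta\rangle$ --- and therefore $d_K$ and the good set $\mathcal{F}$ --- is identical for every $j$, so a single $\mathcal{F}$ serves all $Y_j$ simultaneously.

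The main obstacle is not in these reductions, which are routine, but in correctly importing the result of \cite{klartag2012variations}: its hypotheses must be matched to the matrix entries, and the roles of the random vector and the averaging direction must be identified. There the i.i.d.\ law is fixed while the sphere direction is averaged; in our setting the i.i.d.\ law is that of the matrix entries while the averaged ``direction'' is the normalised plaintext --- which is exactly the object over which we want a ``for most plaintexts'' guarantee, so the duality works in our favour. The only genuine care is in extracting how the exceptional-set measure trades against the constant, which is what produces the non-increasing $C(\rho)$ through the Markov step (an $L^2$ form of the cited bound would instead give $C(\rho)\propto\rho^{-1/2}$, still non-increasing, so the conclusion is robust to the precise averaging exponent in \cite{klartag2012variations}).
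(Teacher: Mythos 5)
Your proposal is correct and follows essentially the same route as the paper: normalise the plaintext to the unit sphere, invoke the Klartag--Sodin result with the i.i.d.\ matrix entries playing the role of the i.i.d.\ summands and the normalised plaintext as the ``typical direction'', note that one $\mathcal{F}$ serves every row, and rescale the limiting Gaussian to variance $e_x$. The only difference is that the cited Theorem~1.1 of \cite{klartag2012variations} already comes in the form ``for any $\rho$ there is a subset of measure $\geq 1-\rho$ on which the supremum over intervals is at most $C(\rho)\delta^4/n$'', so your detour through the Kolmogorov distance and the Markov-inequality extraction of the good set (with the attendant factor of $2$ and the $\rho^{-1}$ dependence in $C(\rho)$) is a harmless reconstruction of what the reference states directly rather than a needed step.
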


Proposition \ref{ROC} with $\rho$ sufficiently small means that it is most likely (actually, with probability exceeding $1-\rho$) to observe an $O(n^{-1})$ convergence between $f_{{Y}_j | X}$ and the limiting distribution $\mathcal{N}(0, e_x)$. The function $C(\rho)$ is loosely bounded in \cite{klartag2012variations}, so to complete this analysis we performed a thorough Monte Carlo evaluation of its possible values. In particular, we have taken $10^4$ instances of a RV $X$ uniformly distributed on $S_1^{n-1}$ for each $n = 2^4,2^5, \ldots, 2^{10}$. The PDF $f_{Y_j|X}(y_j|x)$ is estimated with the following procedure: we generate $5 \cdot 10^7$ rows of an i.i.d. Bernoulli random matrix and perform the linear combination in \eqref{measures}, thus yielding the same number of instances of ${Y}_j$ for each $x$ and $n$. On this large sample set we are able to accurately estimate the previous PDF on 4096 equiprobable intervals, and compare it to the same binning of the normal distribution as in the LHS of \eqref{klartagth} for each $(x, n)$. This method yields sample values for \eqref{klartagth}, allowing an empirical evaluation of the quantity ${C}(\rho)$. In this example, when $\rho \geq 10^{-3}$ Proposition \ref{ROC} holds with $C(\rho) = 1.34 \cdot 10^{-2}$.

Hence, straightforward statistical attacks on a CS-encoded ciphertext may only extract very limited information from the plaintext. Yet, other attacks may rely on a larger amount of information, the next level of threat being known-plaintext attacks \cite{washington2002introduction}. These attacks are based on the availability of some plaintext-ciphertext pairs and aim at the extraction of information on the encoding process that can be reused to decode future ciphertexts. Due to its criticality and theoretical depth, the robustness of multiclass CS w.r.t. this class of attacks will be tackled in a separate contribution.

\begin{figure}[t]	
\centering
	\subfloat[$\rho \in (0, 1)$]{\includegraphics[width = 1.6in]{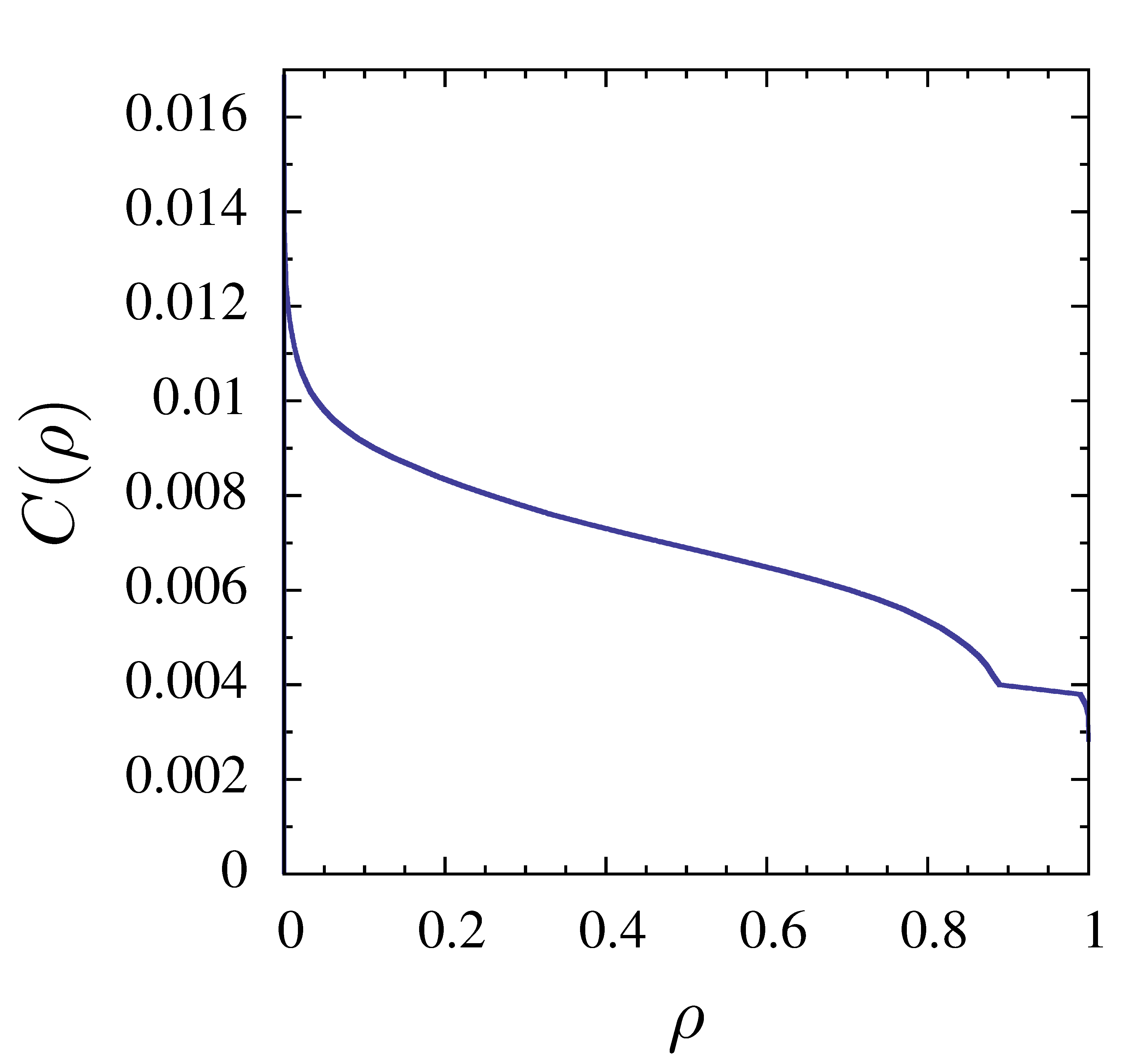}}
	\hspace{1mm}
	\subfloat[$\rho \in (0, 0.01)$]{\includegraphics[width = 1.6in]{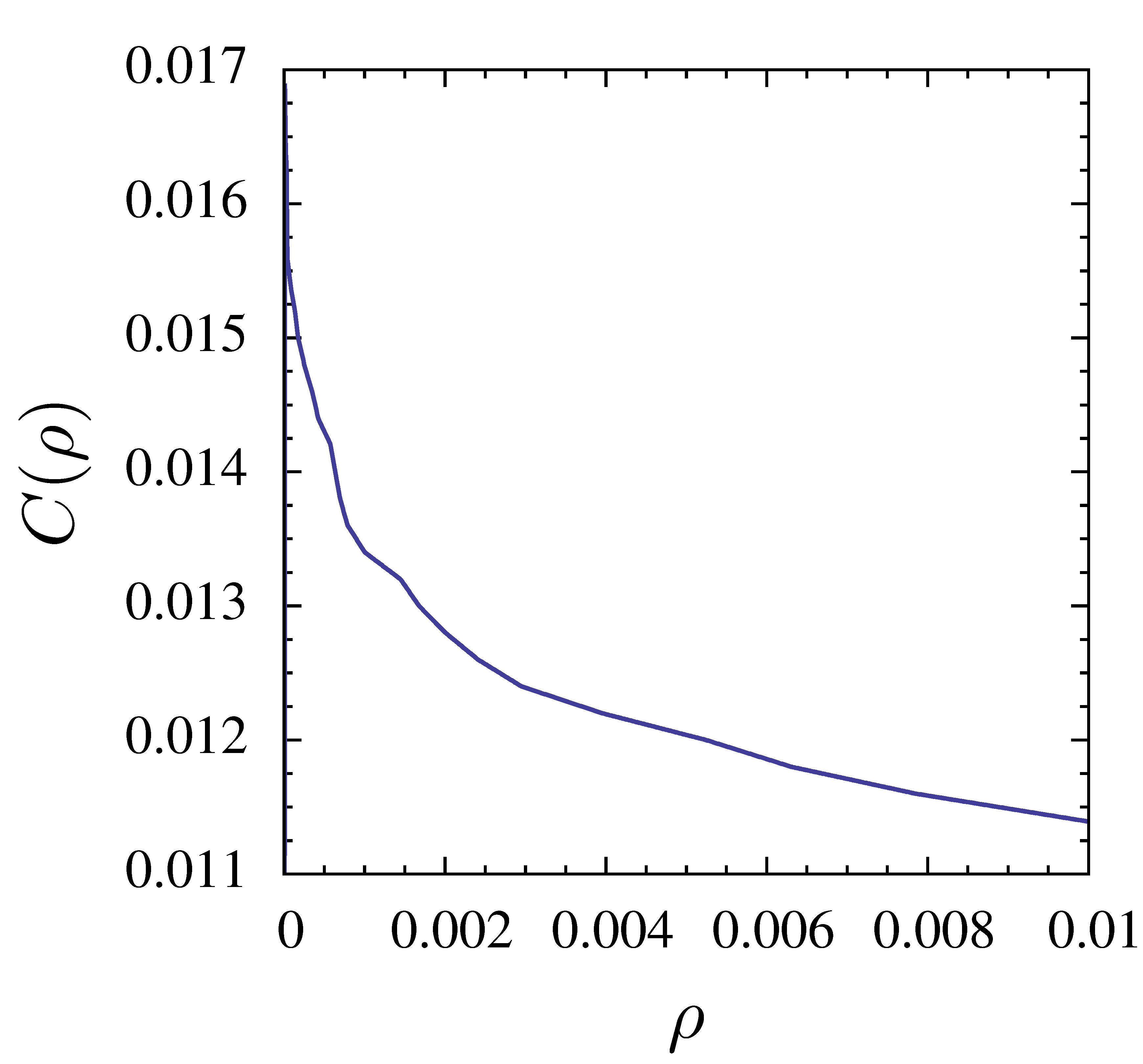}}		
	\caption{Empirical evaluation of $C(\rho)$ in the convergence rate \eqref{klartagth} based on a large number of plaintexts $x$ on the sphere $S^{n-1}_1$ and $n = 2^4,2^5, \ldots, 2^{10}$.}
	\label{fig:crho}
\end{figure}

\section{Application Examples}
\label{section4}
\subsection{Experimental Framework}
In this section we detail some example applications for the multiclass CS scheme we propose. For each example we study the recovery quality attained by first-class receivers against second-class ones in the two-class scheme (Section \ref{twoclssch}). These results encompass the multiclass setting since high-class receivers correspond to first-class recovery performances (\emph{i.e.} $\eta = 0$), while lower-class users attain the performances of a second-class receiver at a fixed $\eta > 0$. 

For each plaintext $x = D s$ being reconstructed the recovery signal-to-noise ratio
$\unit{RSNR}= \frac{\|x\|^2_2}{\|x - \hat{x}\|^2_2}$
with $\hat{x} = D \hat{s}$ denoting the recovered approximation is a common recovery quality index; its average\footnote{$\hat{\mathbb{E}}(\cdot)$ denotes the sample average over a set of realisations of the argument.}, $\unit{ARSNR\, [dB]} = 10 \log_{10} \hat{\mathbb{E}}\left(\textstyle \frac{\|x\|^2_2}{\|x - \hat{x}\|^2_2}\right)$ is then used as an average performance index, and compared against some best- and worst-case curves with the purpose of choosing a suitable perturbation density $\eta$ so that lower-class recovery performances are set to the desired quality level.
 
We complement the previous evidence with an automated assessment of the information content intelligible from $\hat{x}$ by means of feature extraction algorithms. These are equivalent to partially informed attacks attempting to expose the sensitive content inferred from the recovered signal. More specifically, we will try to recover an English sentence from a speech segment, the location of the PQRST peaks in an electrocardiographic (ECG) signal, and printed text in an image. The simulation framework reproducing these tests is available at \url{http://securecs.googlecode.com}. 

\subsubsection{Recovery Algorithms}
While fundamental sparse signal recovery guarantees are well-known from \cite{Candes_2006} when solving BP and BPDN, these convex problems are often replaced in practice by a variety of high-performance algorithms (see \emph{e.g.} \cite{tropp2010computational}). As reference cases for most common algorithmic classes we tested the solution of BPDN as implemented in SPGL$_1$ \cite{van2011sparse,spgl1:2007} against the greedy algorithm CoSaMP \cite{needell2009cosamp} and the generalised approximate message-passing algorithm (GAMP,  \cite{rangan2011generalized}). To optimise these algorithms' performances 
the tests were optimally tuned in a ``genie-aided'' fashion: BPDN was solved as in \eqref{eq:bpdn} with the noise parameter $\gamma = \|\Delta A x\|_2$ as if $(\Delta A, x)$ were known beforehand; CoSaMP was initialised with the exact sparsity level $k$ for each case; GAMP was run with the sparsity-enforcing, i.i.d. Bernoulli-Gaussian prior (see \emph{e.g.}  \cite{vila2011expectation}) broadly applicable in the well-known message passing framework \cite{donoho2009message} and initialised with the exact sparsity ratio $\frac{k}{n}$ of each instance, and the exact mean and variance of each considered test set. 
Moreover, signal-independent parameters were hand-tuned in each case to yield optimal recovery performances. 

For the sake of brevity, in each example we select and report the algorithm that yields the most accurate recovery quality at a lower-class decoder as the amount of perturbation varies. We found that GAMP achieves the highest $\unit{ARSNR}$ in all the settings explored in the examples, consistently with the observations in \cite{vila2011expectation} that assess the robust recovery capabilities of this algorithm under a broadly applicable sparsity-enforcing prior. {Moreover, as $\Delta A$ verifies \cite[Proposition 2.1]{parker2011compressive} the perturbation noise $\varepsilon = \Delta A x$ is approximately Gaussian for large $(m, n)$ and thus GAMP tuned as above yields optimal performances as expected.}
Note that recovery algorithms which attempt to jointly identify $x$ and $\Delta A$ \cite{parker2011compressive, zhu2011sparsity} can be seen as explicit attacks to multiclass encryption and are thus evaluated in a separate contribution, anticipating that their performances are compatible with those of GAMP.

\subsubsection{Average Signal-to-Noise Ratio Bounds}
The perturbation density $\eta$ is the main design parameter for the multiclass encryption scheme, and therefore has to be chosen against a reference lower-class recovery algorithm. To provide criteria for the choice of $\eta$ we adopt two $\unit{ARSNR}$ bounds derived as follows. 

Although rigorous, the lower-class recovery error upper bound of Proposition \ref{ubprops} is only applicable for small values of $(k,\eta)$. To bound typical recovery performances in a larger range we analyse the behaviour of a lower-class decoder that naively (\emph{i.e.} without attempting any attack) recovers $\hat{x}$ such that $y = A^{(0)} \hat{x} = (A^{(0)} + \Delta A) x$, and thus $A^{(0)} (\hat{x}- x) = \Delta A x$. In most cases, such a recovery produces $\hat{x}$ lying close to $x$; we model this by assuming $\|\hat{x}-x\|_2$ is close to be minimum. With this, we may approximate $\hat{x} - x = (A^{(0)})^{+}\Delta A x$, where $\cdot^{+}$ denotes the Moore-Penrose pseudoinverse, that yields $\frac{\|\hat{x}-x\|^2_2}{\|x\|^2_2} \leq \sigma_{\max}((A^{(0)})^{+}\Delta A)^2$. By taking a sample average on both sides, in signal-to-noise ratio our criterion is $\unit{ARSNR} > {\unit{LB}}(m, n, \eta)$ where
\begin{equation}
{\unit{LB}}(m, n, \eta) = \unit[-10 \log_{10} \hat{\mathbb{E}}\left(\sigma_{\max}((A^{(0)})^{+}\Delta A)^2\right)]{dB} \label{eq:practlb}
\end{equation}
${\unit{LB}}(m, n, \eta)$ is calculated in each of the following examples by a thorough Monte Carlo simulation of $\sigma_{\max}((A^{(0)})^{+}\Delta A)$ over $5 \cdot 10^3$ cases. 

The opposite criterion is found by assuming $\unit{ARSNR} < {\unit{UB}}(m, n, \eta)$ where
\begin{equation}
{\unit{UB}}(m, n, \eta) = \unit[-10 \log_{10} \frac{4 \eta m}{(\sqrt{m} + \sqrt{n})^2}]{dB}
\label{eq:practub}
\end{equation}
obtained from a simple rearrangement of \eqref{rerrlb} with $\theta \simeq 1$. We will see how \eqref{eq:practlb} and \eqref{eq:practub} fit the $\unit{ARSNR}$ performances of the examples and provide simple criteria to estimate the range of performances of lower-class receivers from $(m, n, \eta)$.	

\begin{figure}[t]
\centering
	\subfloat[\label{fig:Audio1}]{\includegraphics{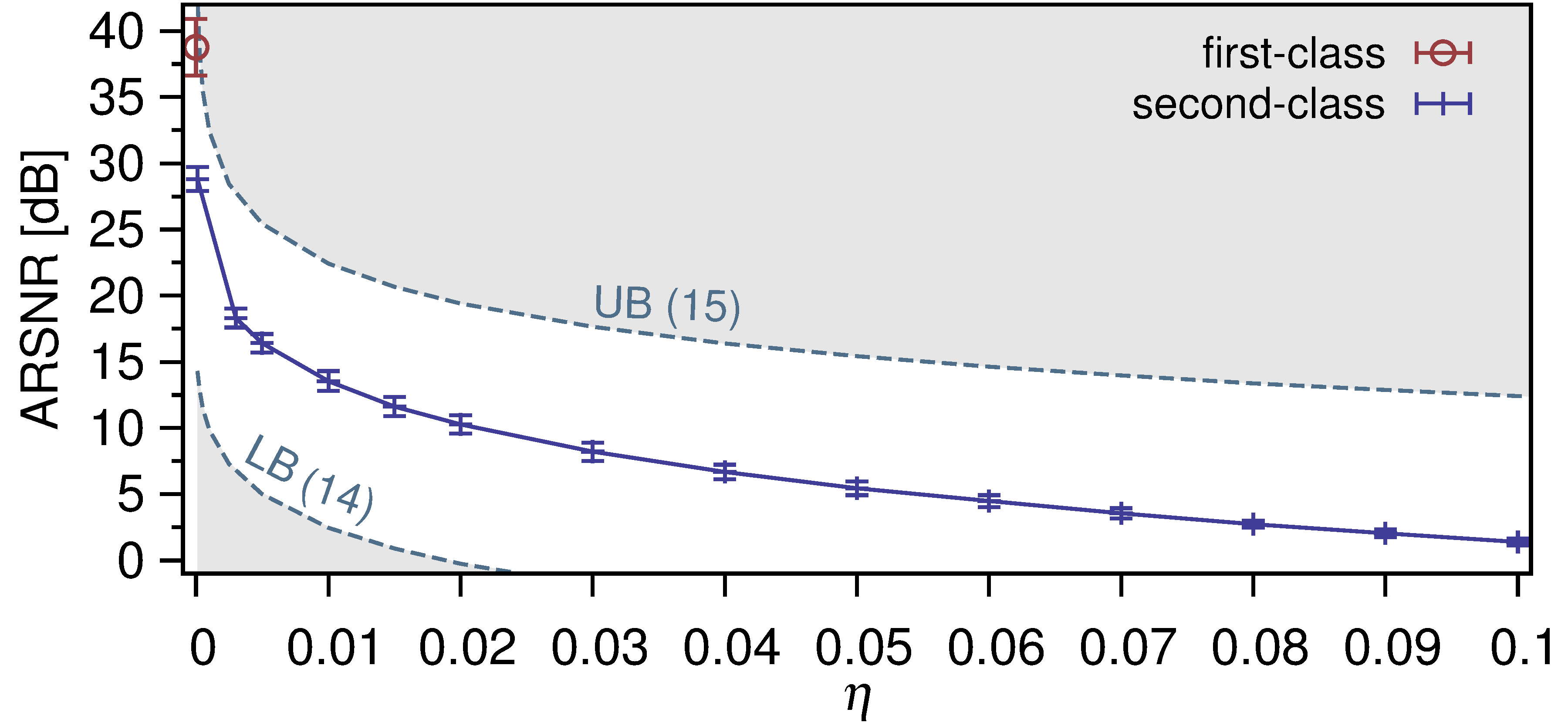}}\\
	\subfloat[\label{fig:Audio2}]{\includegraphics{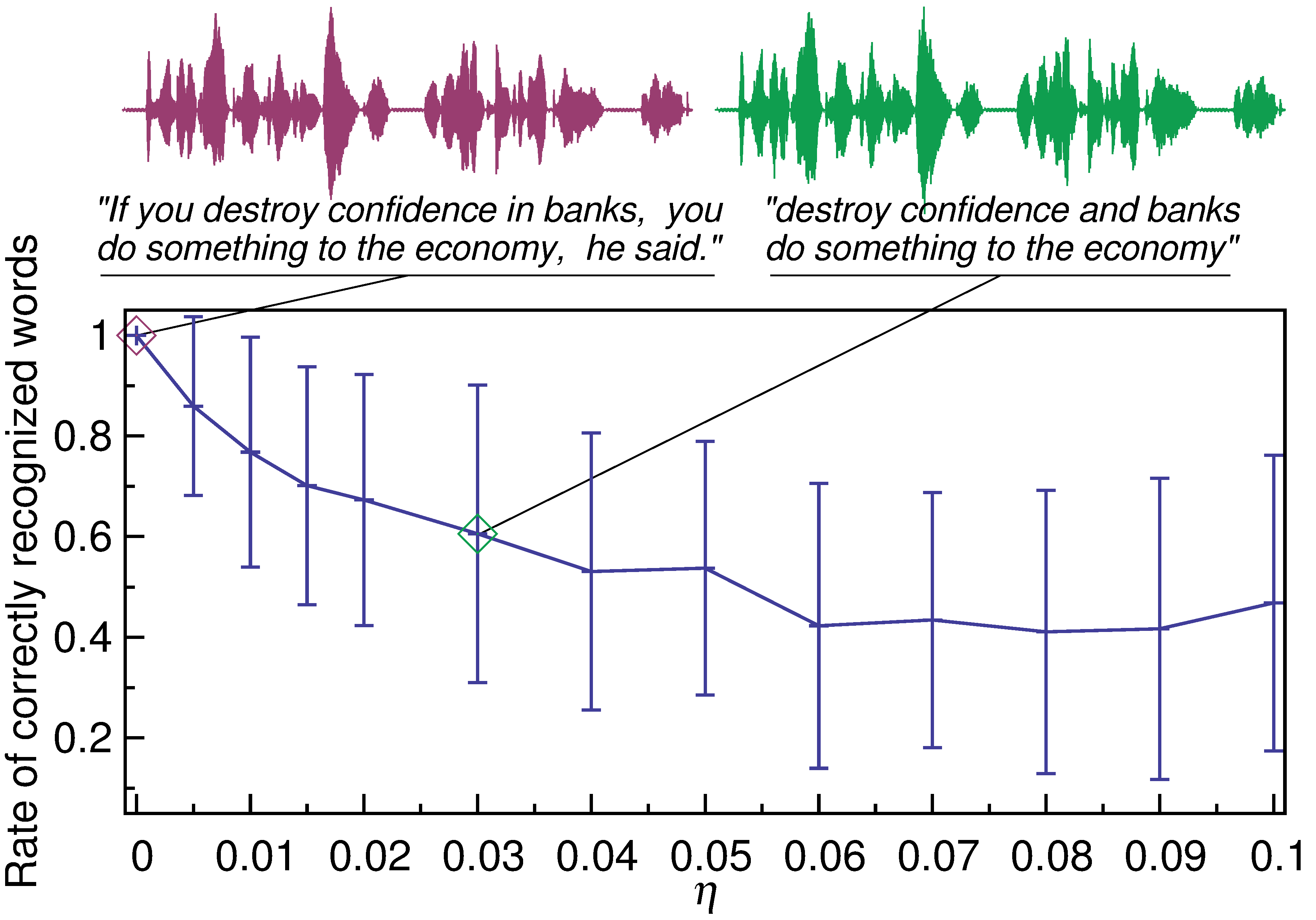}}	
\caption{Multiclass CS of speech signals: (a) Average recovery SNR as a function of the perturbation density $\eta \in [0, 0.1]$ (solid) and second-class $\unit{RSNR}$ upper bound (dashed); (b) Fraction of words exactly recognised by ASR in $\eta \in [0, 0.1]$ (bottom) and typical decoded signals for $\eta = 0, 0.03$ (top).}
\end{figure}

\subsection{Speech Signals}
\label{speechex}

\begin{figure*}[!t]
	\subfloat[\label{fig:Ecg1}]{\centering \includegraphics{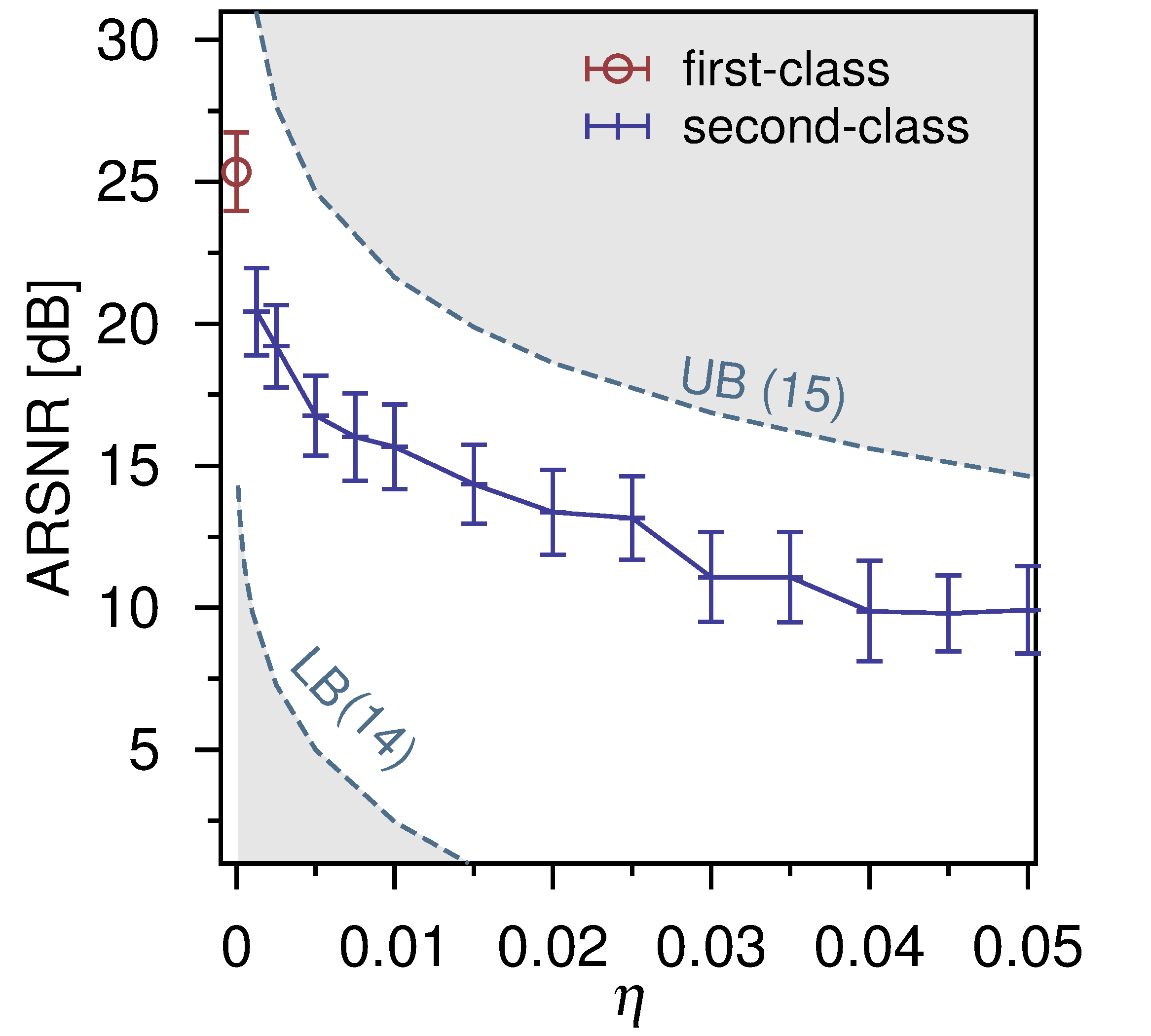}} \hspace{5mm}
	\subfloat[\label{fig:Ecg2}]{\centering \includegraphics{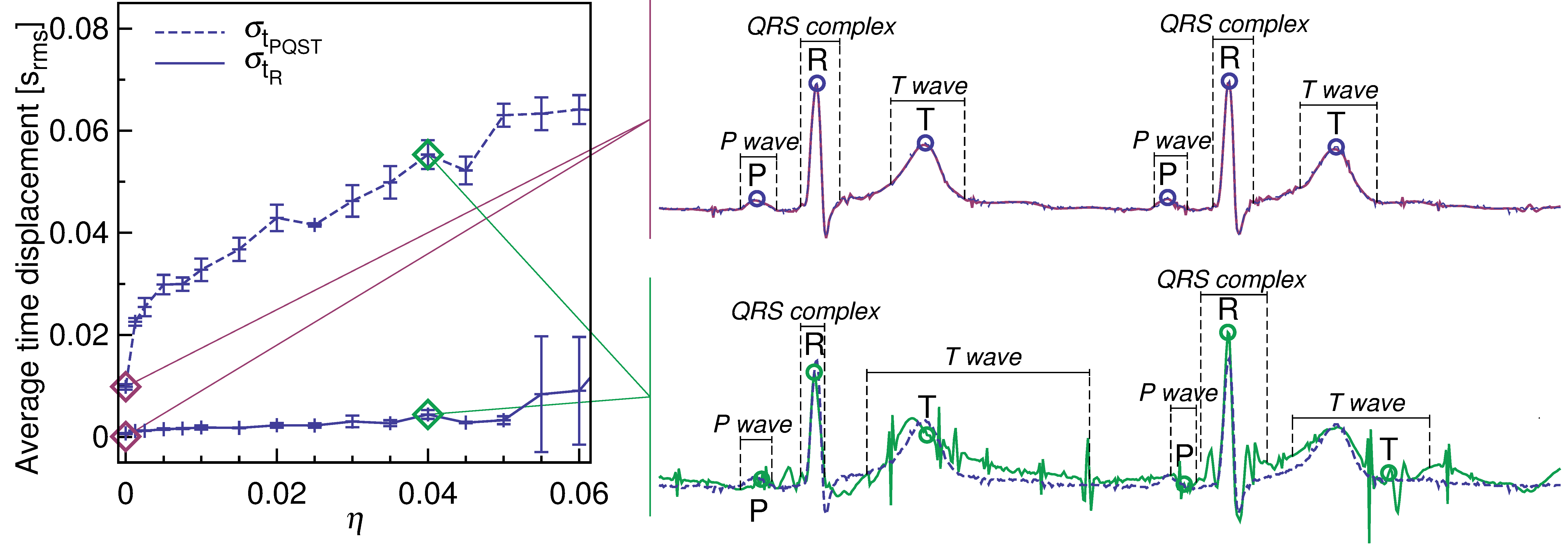}}	
\caption{Multiclass CS of ECG signals: (a) Average recovery SNR as a function of the perturbation density $\eta \in [0, 0.05]$ (solid) and second-class $\unit{RSNR}$ upper bound (dashed); (b) Time displacement (left) of the R (solid) and P,Q,S,T (dashed) peaks as evaluated by APD for $\eta \in [0, 0.05]$ with typical decoded signals (right) for first-class (top) and second-class (bottom) users.}
\end{figure*}

We consider a subset of spoken English sentences from the PTDB-TUG database \cite{PTDB-TUG:2011} with original sampling frequency $f_s=\unit[48]{kHz}$, variable duration and sentence length. 
Each speech signal is divided in segments of $n = 512$ samples and encoded by two-class CS with $m = \frac{n}{2}$ measurements.
We obtain the sparsity basis $D$ by applying principal component analysis \cite{karhunen1947lineare} to $500$ $n$-dimensional segments yielding an ONB. 
The encoding matrix $A^{(1)}$ is generated from an i.i.d. Bernoulli random matrix $A^{(0)}$ by adding to it a sparse random perturbation $\Delta A$ chosen as in \eqref{eq:spperturb} with density $\eta$. The encoding in \eqref{eq:secondclass} is simulated in a realistic setting, where each window $x$ of $n$ samples is acquired with a different instance of $A^{(1)}$ yielding $m$ measurements per speech segment. As for the decoding stage, we apply GAMP as specified above to recover $\hat{x}$ given $A^{(1)}$ (first-class) and $A^{(0)}$ (second-class). 

For a given encoding matrix a first-class receiver is capable of decoding a clean speech signal with $\unit{ARSNR} = \unit[38.76]{dB}$, whereas a second-class receiver is subject to significant $\unit{ARSNR}$ degradation when $\eta$ increases, as shown in Fig.~{\ref{fig:Audio1}}. Note that while the $\unit{RSNR}$ for $\eta = 0$ has a relative deviation of $\unit[2.14]{dB}$ around its mean (the $\unit{ARSNR}$), as $\eta$ increases the observed $\unit{RSNR}$ deviation is less than $\unit[0.72]{dB}$. 
Note how the $\unit{ARSNR}$ values lie in the highlighted range between \eqref{eq:practlb}, \eqref{eq:practub}.

To further quantify the limited quality of attained recoveries, we process the recovered signal with the Google Web Speech API \cite{speechapi, hinton2012deep} which provides basic Automatic Speech Recognition (ASR). The ratio of words correctly inferred by ASR for different values of $\eta$ is reported in Fig.~{\ref{fig:Audio2}}. This figure also reports a typical decoding case: a first-class user (\emph{i.e.} $\eta = 0$) recovers the signal with  $\unit{RSNR} = \unit[36.58]{dB}$, whereas a second-class decoder only achieves a $\unit{RSNR}=\unit[8.42]{dB}$ when $\eta=0.03$. The corresponding ratio of recognised words is $\frac{14}{14}$ against $\frac{8}{14}$. In both cases the sentence is intelligible to a human listener, but the second-class decoder recovers a signal that is sufficiently corrupted to avoid straightforward ASR.

\subsection{Electrocardiographic Signals}

We extend the example in \cite{cambareri2013twoclass} by processing a large subset of ECG signals from the MIT PhysioNet database \cite{PhysioNet} sampled at $f_s=\unit[256]{Hz}$. In particular, we report the case of a typical $25$ minutes ECG track (sequence \verb|e0108|) and encode windows of $n=256$ samples by two-class CS with $m = 90$ measurements, amounting to a dataset of $1500$ ECG instances.
The encoding and decoding stages are identical to those in Section \ref{speechex} and we assume the Symmlet-6 ONB \cite{mallat1999wavelet} as the sparsity basis $D$.

In this setting, the first-class decoder is able to reconstruct the original signal with $\unit{ARSNR}=\unit[25.36]{dB}$, whereas a second-class decoder subject to a perturbation of density $\eta = 0.03$ achieves an $\unit{ARSNR}=\unit[11.08]{dB}$; the recovery degradation depends on $\eta$ as reported in Fig. \ref{fig:Ecg1}.

As an additional quantification of the encryption at second-class decoders we apply PUWave \cite{jane1997evaluation}, an Automatic Peak Detection algorithm (APD), to first- and second-class signal reconstructions. In more detail, PUWave is used to detect the position of the P,Q,R,S and T peaks, \emph{i.e.} the sequence of pulses whose positions and amplitudes summarise the diagnostic properties of an ECG. 
The application of this APD yields the estimated peak instants $\hat{t}_{\rm P,Q,R,S,T}$ for each of $J = 1500$ reconstructed windows and each decoder class, which are afterwards compared to the corresponding peak instants as detected on the original signal prior to encoding. Thus, we define the average time displacement $\sigma_{t} = \sqrt{\frac{1}{J} \sum^{J-1}_{i = 0} (\hat{t}^{(i)} - t^{(i)})^2}$ and evaluate it for $t_{\rm R}$ and $t_{\rm PQST}$. A first-class receiver is subject to a displacement $\sigma_{t_{\rm R}} = \unit[0.6]{ms_{rms}}$ of the R-peak and $\sigma_{t_{\rm PQST}} = \unit[9.8]{ms_{rms}}$ of the remaining peaks w.r.t. the original signal. On the other hand, a second-class user is able to determine the R-peak with $\sigma_{t_{\rm R}} = \unit[4.4]{ms_{rms}}$ while the displacement of the other peaks is $\sigma_{t_{\rm PQST}} = \unit[55.3]{ms_{rms}}$. As $\eta$ varies in $[0, 0.05]$ this displacement increases as depicted in Fig. \ref{fig:Ecg2}, thus confirming that a second-class user will not be able to accurately determine the position and amplitude of the peaks with the exception of the R-peak. 

\subsection{Sensitive Text in Images}
In this final example we consider an image dataset of people holding printed identification text and apply multiclass CS to selectively hide this sensitive content to lower-class users. The $\unit[640 \times 512]{pixel}$ images are encoded by CS in $\unit[10 \times 8]{blocks}$ each of $\unit[64 \times 64]{pixel}$ while the two-class strategy is only applied to a relevant image area of $\unit[3 \times 4]{blocks}$. We adopt as sparsity basis the 2D Daubechies-4 wavelet basis \cite{mallat1999wavelet} and encode each block of $n = \unit[4096]{pixels}$ with $m = 2048$ measurements; the encoding is generated with perturbation density $\eta \in [0, 0.4]$.

The $\unit{ARSNR}$ performances of this example are reported in Fig. \ref{fig:IMG1} as averaged on 20 instances per case, showing a rapid degradation of the $\unit{ARSNR}$ as $\eta$ is increased. This degradation is highlighted in the typical case of Fig. \ref{fig:IMG2} for $\eta = 0.03, 0.2$. 

In order to assess the effect of our encryption method with an automatic information extraction algorithm, we have applied Tesseract \cite{smith2007tesseract}, an optical character recognition (OCR) algorithm, to the images reconstructed by a second-class user. The text portion in the recovered image data is preprocessed to enhance their quality prior to OCR: the images are first rotated, then we apply standard median filtering to reduce the highpass noise components. Finally, contrast adjustment and thresholding yield the two-level image which is processed by Tesseract. To assess the attained OCR quality we have measured the average number of consecutive recognised characters (CRC) from the decoded text image. In Fig. \ref{fig:IMG2} the average CRC is reported as a function of $\eta$: as the perturbation density increases the OCR fails to recognise an increasing number of ordered characters, \emph{i.e.} a second-class user progressively fails to extract text content from the decoded image.

\begin{figure}[!t]
\centering
	\subfloat[\label{fig:IMG1}]{\includegraphics{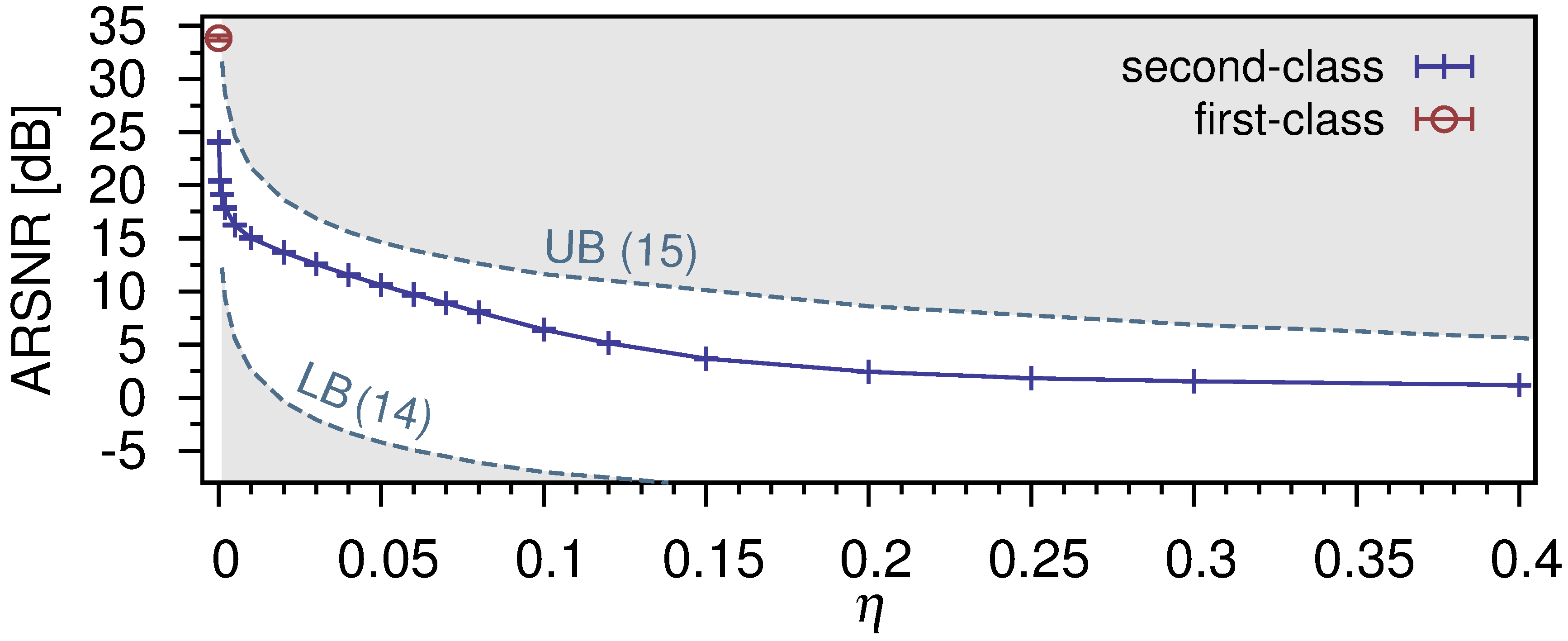}}\\
	\subfloat[\label{fig:IMG2}]{\includegraphics{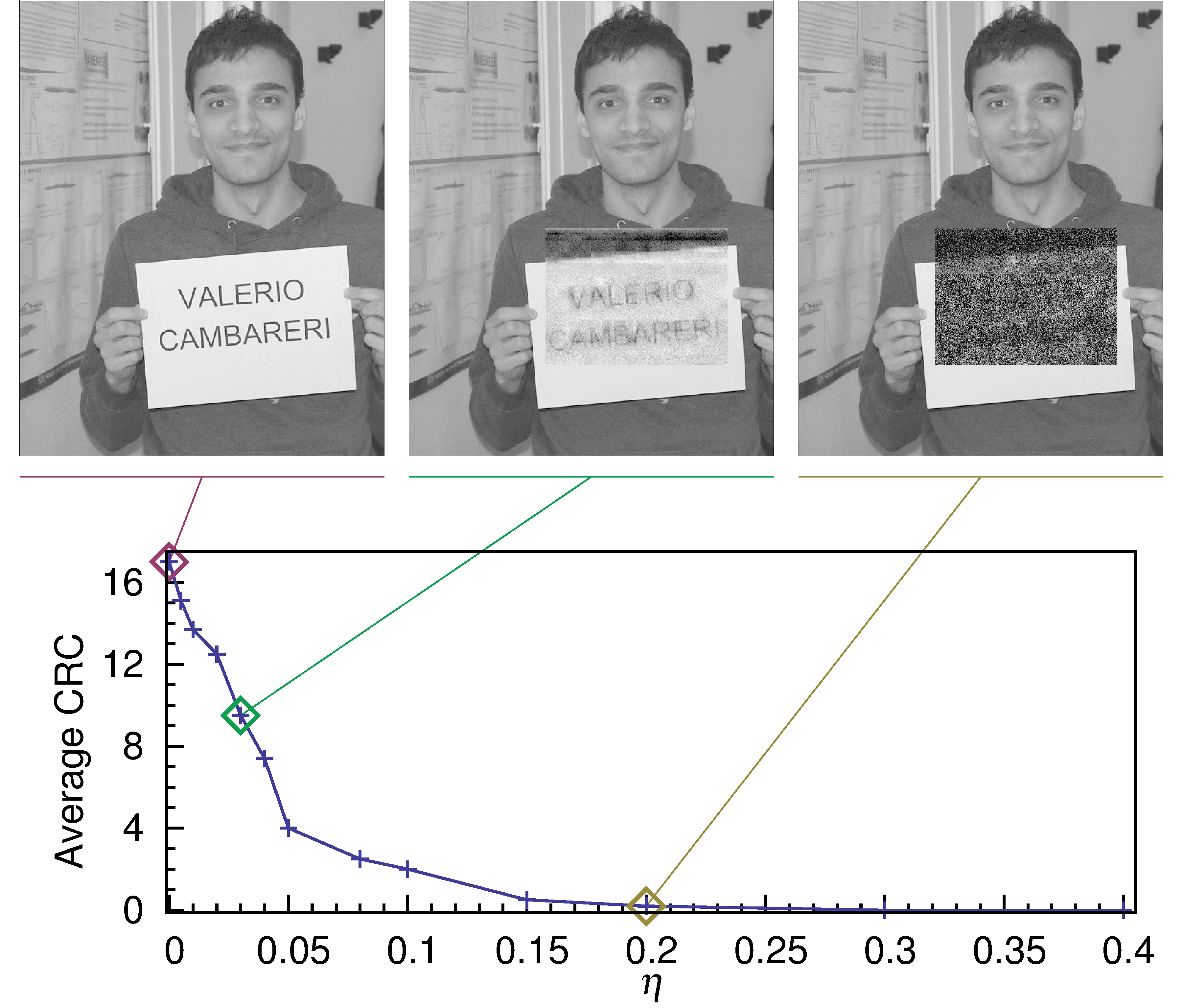}}	
\caption{Multiclass CS of images: (a) Average recovery SNR as a function of the perturbation density $\eta \in [0, 0.4]$ (solid) and second-class $\unit{RSNR}$ upper bound (dashed); (b) Average consecutive recognised characters by OCR for $\eta \in [0, 0.4]$ (bottom) and typical instances for $\eta = 0, 0.03, 0.2$ (top).}
\end{figure}

\section{Conclusion}
Although not perfectly secure, the extremely simple encoding process entailed by CS yields some encryption capabilities with no additional computational complexity, thus providing a limited but zero-cost form of encryption which might be of interest in the design of secure yet resource-limited sensing interfaces. 
In particular, we have shown that when i.i.d. Bernoulli random matrices are used in this linear encoding scheme the plaintext features that leak into the ciphertext (and therefore retrievable by statistical analysis of the latter) are limited to the power of the plaintext as $n \rightarrow \infty$, and thus how an asymptotic definition of secrecy holds for this scheme. In addition, we have given evidence of the $O(\frac{1}{n})$ convergence rate to this limit behaviour. 
We have also detailed how two plaintexts having the same energy and encoded with an i.i.d. Bernoulli random matrix generate statistically indistinguishable ciphertexts, while even small differences in energy are detected by hypothesis testing on the ciphertext for finite $n$. %
The above linear random encoding was modified to envision a multiclass encryption scheme in which all receivers are given the same set of measurements, but are only enabled to reconstruct the original signal with a decoding quality that depends on their class, \emph{i.e.} on the private key they possess. This additional design option amounts to the ability of flipping pseudo-randomly chosen elements of the encoding matrix, and thus represents an appealing alternative to balance the trade-off between the security of the encoded signal and the resources required to provide it.

Finally, the capabilities of multiclass CS were exemplified by simulating the acquisition of sources such as speech segments, electrocardiographic signals and images with the additional security provided by the devised encryption method.

\appendices

\section{Proofs regarding the Second-Class \\ Recovery Error Lower Bound}
We first introduce a Lemma that gives a self-contained probabilistic result on the Euclidean norm of $\varepsilon = \Delta A x$ in \eqref{eq:secondclass}; this is used in the proofs of Theorem \ref{th1} and Corollary 1.
\begin{lem}
\label{lemma2}
Let:
\begin{enumerate} 
\item $\xi$ be a RV with $\mathscr{E}_\xi = \mathbb{E}[\sum^{n-1}_{j=0} \xi^2_j]$, $\mathscr{F}_\xi = \mathbb{E}[(\sum^{n-1}_{j=0} \xi^2_j)^2]$;
\item $\Delta A$ be the sparse random matrix in \eqref{eq:spperturb} with i.i.d. entries and density $\eta = \frac{c}{m n} \leq \frac{1}{2}$.
\end{enumerate}
If $\xi$ and $\Delta A$ are independent, then for any $\theta \in (0, 1)$
\begin{equation}
\mathbb{P}\left(\|\Delta A \xi\|^2_2 \geq 4 m \eta \, \mathscr{E}_\xi \theta \right) \geq \zeta 
\label{eq:lem22}
\end{equation}
with
\begin{equation}
\textstyle \zeta =\left\{1+(1-\theta)^{-2}\left[\left(1+\frac{1}{m}(\frac{3}{2\eta} - 1)\right)\frac{\mathscr{F}_\xi}{\mathscr{E}_\xi^2}-1\right]\right\}^{-1}
\label{eq:lem2}
\end{equation}
\end{lem}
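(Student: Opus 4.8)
The plan is to recognise \eqref{eq:lem2} as a one-sided Chebyshev (Cantelli) bound applied to the nonnegative random variable $S = \|\Delta A \xi\|_2^2$, for which I first compute the first two moments conditionally on $\xi$ and then integrate over $\xi$, the two being independent by hypothesis. First, since each entry $\Delta A_{j,l}$ vanishes with probability $1-\eta$ and equals $\pm 2$ otherwise (cf. \eqref{eq:spperturb}), it is symmetric with $\mathbb{E}[\Delta A_{j,l}^2] = 4\eta$ and $\mathbb{E}[\Delta A_{j,l}^4] = 16\eta$. Writing $S = \sum_{j=0}^{m-1} Z_j^2$ with $Z_j = \sum_l \Delta A_{j,l}\xi_l$, independence and zero mean of the entries give $\mathbb{E}[Z_j^2\mid\xi] = 4\eta\|\xi\|_2^2$, so that $\mathbb{E}[S\mid\xi] = 4m\eta\|\xi\|_2^2$ and, integrating over $\xi$, the mean $\mu := \mathbb{E}[S] = 4m\eta\mathscr{E}_\xi$ — precisely the threshold against which $\theta$ is measured in \eqref{eq:lem22}.

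The heart of the argument is the second moment. Expanding $\mathbb{E}[Z_j^4\mid\xi]$ keeps only the all-equal and two-pair index patterns, yielding $16\eta(1-3\eta)\sum_l\xi_l^4 + 48\eta^2\|\xi\|_2^4$; combining this with the cross terms $\mathbb{E}[Z_j^2 Z_{j'}^2\mid\xi] = 16\eta^2\|\xi\|_2^4$ for $j\neq j'$ (independent rows) and integrating over $\xi$ gives $\mathbb{E}[S^2] = 16\eta m(1-3\eta)\,\mathbb{E}[\sum_l\xi_l^4] + 16\eta^2 m(m+2)\mathscr{F}_\xi$. The only term not directly expressible through $\mathscr{E}_\xi,\mathscr{F}_\xi$ is $\mathbb{E}[\sum_l\xi_l^4]$, and this is where the stated coefficient $\tfrac{3}{2}$, rather than a smaller one, enters: I bound it in two regimes. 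For $\eta\le\tfrac13$ the prefactor $1-3\eta$ is nonnegative and I use $\sum_l\xi_l^4\le\|\xi\|_2^4$, while for $\tfrac13<\eta\le\tfrac12$ it is negative and I use $\sum_l\xi_l^4\ge 0$; both cases collapse into the single uniform estimate $\mathbb{E}[S^2]\le 16m\eta\mathscr{F}_\xi\bigl[\tfrac{3}{2}+\eta(m-1)\bigr] = R\mu^2$, valid throughout $\eta\le\tfrac12$, with $R = \bigl(1+\tfrac1m(\tfrac{3}{2\eta}-1)\bigr)\tfrac{\mathscr{F}_\xi}{\mathscr{E}_\xi^2}$.

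Finally, Cantelli's inequality applied to $S$ gives $\mathbb{P}(S\ge\theta\mu)\ge\frac{(1-\theta)^2\mu^2}{\mathrm{Var}(S)+(1-\theta)^2\mu^2}$; since this bound is decreasing in $\mathrm{Var}(S)$, substituting the upper bound $\mathrm{Var}(S)=\mathbb{E}[S^2]-\mu^2\le(R-1)\mu^2$ only weakens it, and simplifying yields exactly $\zeta=\{1+(1-\theta)^{-2}(R-1)\}^{-1}$, which is \eqref{eq:lem2}.

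I expect the moment computation in the second step — in particular tracking the two-pair combinatorial factor in $\mathbb{E}[Z_j^4\mid\xi]$ and the uniform control of $\mathbb{E}[\sum_l\xi_l^4]$ over the full range $\eta\le\tfrac12$ — to be the main obstacle; the rest is the routine insertion into Cantelli's bound. A secondary point worth care is keeping the direction of the one-sided Chebyshev inequality correct, so that the upper bound on $\mathrm{Var}(S)$ correctly produces a \emph{lower} bound on the probability.
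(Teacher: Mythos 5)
Your proof is correct and follows essentially the same route as the paper: compute $\mathbb{E}[\|\Delta A\xi\|_2^2]=4m\eta\mathscr{E}_\xi$ and $\mathbb{E}[(\|\Delta A\xi\|_2^2)^2]=16m\eta\bigl[\eta(m+2)\mathscr{F}_\xi+(1-3\eta)\mathscr{G}_\xi\bigr]$ with $\mathscr{G}_\xi=\mathbb{E}[\sum_l\xi_l^4]$, absorb the $\mathscr{G}_\xi$ term into $\tfrac{3}{2}\mathscr{F}_\xi$ using $\eta\le\tfrac12$ (your two-regime case split is exactly the paper's observation that $3\eta(\mathscr{F}_\xi-\mathscr{G}_\xi)+\mathscr{G}_\xi\le\tfrac32\mathscr{F}_\xi$), and conclude via the one-sided Chebyshev/Cantelli bound for the nonnegative variable $\|\Delta A\xi\|_2^2$. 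The only difference is organizational — you compute the second moment row by row rather than through the full fourth-order moment tensor of $\Delta A$ — and your constants match the paper's throughout.
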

\begin{proof}[Proof of Lemma 1]
Consider $$\|\Delta A \xi \|^2_2 = \sum^{m-1}_{j=0} \sum^{n-1}_{l=0}\sum^{n-1}_{i=0}\Delta A_{j, l}\Delta A_{j, i}\xi_l \xi_i$$ 
We now derive the first and second moments of this positive RV. $\Delta A$ is a random matrix of i.i.d. RVs 
with mean $\mu_{\Delta A_{j,l}} = 0$, variance $\sigma^2_{\Delta A_{j,l}} = 4 \eta$ and $\mathbb{E}[\Delta A^4_{j,l}] = 16 \eta$. 
Using the independence between $\xi$ and $\Delta A$, and the fact that $\Delta A$ is i.i.d. we have
\begin{IEEEeqnarray}{c}
\mathbb{E}\left[\|\Delta A \xi\|^2_2\right] = \sum^{m-1}_{j=0} \sum^{n-1}_{l=0}\sum^{n-1}_{i=0}  \mathbb{E}[\Delta A_{j,l} \Delta A_{j,i}]  \mathbb{E}[\xi_l \xi_i] = \nonumber \\ \sum^{m-1}_{j=0} \sum^{n-1}_{l=0}\sum^{n-1}_{i=0} \sigma^2_{\Delta A} \delta_{l, i} \mathbb{E}[\xi_l \xi_i] = \sum^{m-1}_{j=0} \sigma^2_{\Delta A_{j,l}}  \sum^{n-1}_{l=0}\mathbb{E}[\xi^2_l] = 4 m \eta \, \mathscr{E}_\xi
\nonumber 
\end{IEEEeqnarray}

\noindent For the aforementioned properties of $\Delta A$ we also have
\begin{IEEEeqnarray}{c} { \mathbb{E}[\Delta A_{j,l} \Delta A_{j, i} \Delta A_{v,h} \Delta A_{v, o}]} = \nonumber \\ =
\textstyle{\begin{cases} \sigma^4_{\Delta A}, & \scriptsize{{\begin{cases} j \neq v, l = i, h = o \\ j = v, l = i, h = o, l \neq h \\ j = v,  l = h, i = o, l \neq i \\ j = v,  l = o, i = h, l \neq i \end{cases}}} \\ \mathbb{E}[\Delta A^4_{j,l}], & j = v, l = i = h = o \\ 0,  &\text{otherwise} \end{cases}}\nonumber
\end{IEEEeqnarray}

\noindent that can be used in some cumbersome but straightforward calculations yielding
\begin{equation}\mathbb{E}\left[(\|\Delta A \xi\|^2_2)^2\right] = 16 m \eta (\eta (m-1) \mathscr{F}_\xi + 3 \eta (\mathscr{F}_\xi - \mathscr{G}_\xi) + \mathscr{G}_\xi) \nonumber
\end{equation}
where $\mathscr{G}_\xi = \mathbb{E}\left[\sum^{n-1}_{j = 0} \xi^4_j\right]$. We are now in the position of using a one-sided version of Chebyshev's inequality for positive RVs\footnote{If a r.v. $Z \geq 0$ then $\forall \theta \in (0, 1)$, $\mathbb{P}\left(Z \geq \theta \mathbb{E}[Z]\right) \geq \frac{(1-\theta)^2 \mathbb{E}[Z]^2}{(1-\theta)^2 \mathbb{E}[Z]^2 + \sigma^2_Z}$.}
 to say that, for any $\theta \in (0, 1)$,
\begin{IEEEeqnarray}{c}
\mathbb{P}\left(\|\Delta A \xi\|^2_2 \geq \theta  \mathbb{E}[\|\Delta A \xi\|^2_2] \right) \geq \nonumber \\
\geq \left\{\textstyle 1 + (1-\theta)^{-2}\left[\frac{\mathbb{E}[(\|\Delta A \xi\|^2_2)^2]}{\mathbb{E}[\|\Delta A \xi\|^2_2]^2} - 1 \right]\right\}^{-1} = \nonumber \\
= \left\{\textstyle 1+(1-\theta)^{-2}\left[\left(1-\frac{1}{m}\right) \frac{\mathscr{F}_\xi}{\mathscr{E}_\xi^2} + \frac{3 \eta (\mathscr{F}_\xi - \mathscr{G}_\xi) + \mathscr{G}_\xi}{\eta m \mathscr{E}^2_\xi} - 1 \right]\right\}^{-1} \nonumber
\end{IEEEeqnarray}
which yields \eqref{eq:lem2} by considering that when $\eta \leq \frac{1}{2}$, $3 \eta (\mathscr{F}_\xi - \mathscr{G}_\xi) + \mathscr{G}_\xi \leq \frac{3}{2} \mathscr{F}_\xi$.
\end{proof}

\label{app1}
\begin{proof}[Proof of Theorem \ref{th1}]
Since all decoders receive in absence of other noise sources the same measurements $y = A^{(1)} x$, a second-class decoder would naively assume $y = A^{(0)} \hat{x}$, with $\hat{x}$ an approximation of $x$ obtained by a recovery algorithm that satisfies this equality. Since $A^{(1)} = A^{(0)} + \Delta A$, if we define $\Delta x = \hat{x} - {x}$ we may write
$A^{(0)} x + \Delta A x  = A^{(0)} \hat{x}$
and thus
$A^{(0)} \Delta x = \Delta A x$. $\| \Delta x \|^2_2$ can then be bounded straightforwardly as
$\sigma_{\max}(A^{(0)})^2 \|\Delta x\|^2_2 \geq {\|\Delta A x\|^2_2}$ yielding
\begin{equation}
\|\hat{x} - x\|^2_2 \geq \dfrac{\|\Delta A x\|^2_2}{\sigma_{\max}(A^{(0)})^2}
\label{lb0}
\end{equation}
By applying the probabilistic lower bound of Lemma \ref{lemma2} on $\|\Delta A x \|^2_2$ in \eqref{lb0}, we have that $\|\Delta A x \|^2_2 \geq 4 m \eta \, \mathscr{E}_x \theta$ for $\theta \in (0, 1)$ and a given probability value exceeding $\zeta$ in \eqref{eq:lem2}. Plugging the RHS of this inequality in \eqref{lb0} yields \eqref{finitererrlb}.
\end{proof}

The following Lemma applies to finding the asymptotic result \eqref{rerrlb} of Corollary 1.
\begin{lem}
\label{lemma1}
Let $\Xone$ be an $\alpha$-mixing RP with uniformly bounded fourth moments $\mathbb{E}[X^4_j] \leq m_x$ for some $m_x > 0$. Define ms 
$\mathscr{E}_x = \mathbb{E}\left[\sum^{n-1}_{j = 0} X^2_j\right], \ \mathscr{F}_x = \mathbb{E}\left[\left(\sum^{n-1}_{j = 0} X^2_j\right)^2\right]$.

If $\mathscr{W}_x = \lim_{n \rightarrow \infty} \frac{1}{n} \mathscr{E}_x > 0$ then
$\lim_{n \rightarrow \infty} \dfrac{\mathscr{F}_x}{\mathscr{E}^2_x} = 1$.
\end{lem}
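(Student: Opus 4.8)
Write $S_n=\sum_{j=0}^{n-1}X_j^2$, so that $\mathscr{E}_x=\mathbb{E}[S_n]$ and $\mathscr{F}_x=\mathbb{E}[S_n^2]$. The plan is to recast the claim as a statement about the relative fluctuation of $S_n$: since $\mathrm{Var}(S_n)=\mathscr{F}_x-\mathscr{E}_x^2$, we have the exact identity
\[
\frac{\mathscr{F}_x}{\mathscr{E}_x^2}=1+\frac{\mathrm{Var}(S_n)}{\mathscr{E}_x^2}.
\]
Because $\mathrm{Var}(S_n)\ge 0$ the ratio is always at least $1$, so it suffices to prove that the second term vanishes. The hypothesis $\mathscr{W}_x=\lim_n\frac1n\mathscr{E}_x>0$ gives $\mathscr{E}_x^2\sim\mathscr{W}_x^2 n^2$, hence the whole task reduces to showing $\mathrm{Var}(S_n)=o(n^2)$.

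Next I would expand the variance into a double sum of covariances of the squared samples. Setting $Z_j=X_j^2$, the bounded fourth moment $\mathbb{E}[X_j^4]\le m_x$ says exactly that the $Z_j$ are uniformly bounded in $L^2$, so by Cauchy--Schwarz $|\mathrm{Cov}(Z_j,Z_k)|\le\sqrt{\mathbb{E}[Z_j^2]\,\mathbb{E}[Z_k^2]}\le m_x$ for every pair; in particular all covariances are finite and uniformly bounded. Thus
\[
\mathrm{Var}(S_n)=\sum_{j=0}^{n-1}\sum_{k=0}^{n-1}\mathrm{Cov}(Z_j,Z_k),
\]
and I would introduce the lag envelope $\rho(d)=\sup_{|j-k|=d}|\mathrm{Cov}(Z_j,Z_k)|\le m_x$.

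The crux, and the step I expect to be the main obstacle, is to prove that $\rho(d)\to 0$ as the lag $d\to\infty$ using only $\alpha$-mixing together with the \emph{borderline} $L^2$-control of $Z_j$. For a stationary process this is clean: strong mixing with $\alpha(d)\to0$ implies mixing in the ergodic-theoretic sense, so any square-integrable functional decorrelates, and since $X_0\in L^4$ gives $Z_0\in L^2$ we obtain $\mathrm{Cov}(Z_0,Z_d)\to0$ directly. In general I would argue by truncation: writing $Z_j=\min(Z_j,M)+\big(Z_j-M\big)_+$, the bounded part obeys the classical strong-mixing covariance bound $|\mathrm{Cov}(\min(Z_j,M),\min(Z_k,M))|\le 4M^2\alpha(|j-k|)$, which tends to $0$ for fixed $M$, while the remainder is controlled in $L^2$ through the uniform moment bound and Cauchy--Schwarz; one then sends $d\to\infty$ and afterwards $M\to\infty$, the delicate point being that at this borderline moment level the tail contributions must be shown to be uniformly negligible.

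Finally I would close with a Cesàro (Toeplitz) estimate. Grouping the double sum by lag,
\[
\frac{1}{n^2}\,\mathrm{Var}(S_n)\le \frac{1}{n^2}\sum_{d=-(n-1)}^{n-1}(n-|d|)\,\rho(|d|)\le \frac{2}{n}\sum_{d=0}^{n-1}\rho(d).
\]
Since $\rho(d)\to0$, its arithmetic means tend to $0$, so $\mathrm{Var}(S_n)=o(n^2)$. Combined with $\mathscr{E}_x^2\sim\mathscr{W}_x^2 n^2$ this forces $\mathrm{Var}(S_n)/\mathscr{E}_x^2\to0$, and the opening identity then yields $\mathscr{F}_x/\mathscr{E}_x^2\to1$.
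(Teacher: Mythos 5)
Your plan follows essentially the same route as the paper's proof: the identity $\mathscr{F}_x/\mathscr{E}_x^2=1+\mathrm{Var}(S_n)/\mathscr{E}_x^2$ (the paper phrases the nonnegativity via Jensen), the reduction to $\mathrm{Var}(S_n)=o(n^2)$ using $\mathscr{E}_x^2\sim\mathscr{W}_x^2n^2$, the expansion of the variance into covariances $\mathscr{X}_{j,l}=\mathrm{Cov}(X_j^2,X_l^2)$ grouped by lag, and the closing Ces\`aro step (the paper invokes Stolz--Ces\`aro for the case $\sum_h\alpha(h)=\infty$; your ``arithmetic means of a null sequence vanish'' is the same fact). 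The one place you diverge is the step you yourself flag as the crux: the paper does \emph{not} derive the decay of $\mathrm{Cov}(X_j^2,X_l^2)$ from the abstract mixing coefficients at all --- it simply reads the $\alpha$-mixing hypothesis as directly supplying $|\mathscr{X}_{j,l}|\leq\alpha(|j-l|)$ with $\alpha(h)\to 0$, i.e.\ it takes the covariance envelope of the squares as given. Your attempt to prove this from first principles via truncation is more honest but, as you suspect, does not close under the stated hypotheses: with only $\mathbb{E}[X_j^4]\leq m_x$ the family $\{X_j^4\}$ is bounded in $L^1$ but need not be uniformly integrable, so $\|(Z_j-M)_+\|_2$ need not vanish uniformly in $j$ as $M\to\infty$, and the Davydov covariance inequality degenerates exactly at the $L^2$--$L^2$ endpoint. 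So your outline matches the paper's argument step for step; the obstacle you isolate is real, and the paper circumvents it only by building the covariance bound into its reading of the mixing assumption rather than by supplying the uniform-integrability (or higher-moment) condition your truncation would need.
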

\begin{proof}[Proof of Lemma 2]
Note first that from Jensen's inequality $\mathscr{F}_x \geq \mathscr{E}^2_x$, so $\lim_{n \rightarrow \infty} \frac{1}{n} \mathscr{E}_x > 0$ also implies that $\lim_{n \rightarrow \infty} \frac{1}{n^2} \mathscr{E}^2_x > 0$ and $\lim_{n \rightarrow \infty} \frac{1}{n^2} \mathscr{F}_x > 0$. Since $\lim_{n \rightarrow \infty} \frac{1}{n^2} \mathscr{E}^2_x = \mathscr{W}^2_x > 0$ we may write
\begin{IEEEeqnarray}{c}
\lim_{n \rightarrow \infty} \frac{\mathscr{F}_x}{\mathscr{E}^2_x} = 1 + \textstyle\frac{\lim_{n \rightarrow \infty} {\frac{1}{n^2} \mathscr{F}_x - \frac{1}{n^2} \mathscr{E}^2_x }} {\mathscr{W}^2_x} 
\label{trick}
\end{IEEEeqnarray}
and observe that $\left| \frac{1}{n^2} \mathscr{F}_x -\frac{1}{n^2} \mathscr{E}^2_x \right| \leq \frac{1}{n^2} \sum^{n-1}_{j = 0}\sum_{l=0}^{n-1}  \left|\mathscr{X}_{j,l}\right|$ where
$\mathscr{X}_{j,l} = \mathbb{E}[X^2_j X^2_l] - \mathbb{E}[X^2_j] \mathbb{E}[X^2_l] = \mathbb{E}[(X^2_j - \mathbb{E}[X^2_j])(X^2_l - \mathbb{E}[X^2_l])]$. From the $\alpha$-mixing assumption we know that $\left|\mathscr{X}_{j,l}\right| \leq \alpha(|j-l|) \leq m_x$ and a sequence $\alpha(h)$ vanishing as $h \rightarrow \infty$. Hence,
\begin{IEEEeqnarray}{c}
\left|\frac{1}{n^2} \mathscr{F}_x - \frac{1}{n^2} \mathscr{E}^2_x\right| \leq \frac{1}{n^2} \sum^{n-1}_{j = 0}  \left|\mathscr{X}_{j,j}\right| + \frac{2}{n^2}\sum_{h=1}^{n-1} \sum^{n-h-1}_{j = 0} \left|\mathscr{X}_{j,j+h}\right| \leq \nonumber  \\ \leq  \frac{n \, m_x}{n^2} + \frac{2}{n^2}\sum_{h=1}^{n-1} (n-h) \alpha(h)  \leq  \frac{m_x}{n} + \frac{2}{n}\sum_{h=1}^{n-1} \alpha(h)
\label{expansion}
\end{IEEEeqnarray}
The thesis follows from the fact that the upper bound in \eqref{expansion} vanishes as $n \rightarrow \infty$. This is obvious when $\sum_{h=0}^{+\infty} \alpha(h)$ is convergent. Otherwise, if $\sum_{h=0}^{+\infty} \alpha(h)$ is divergent we may resort to the Stolz-Ces\`aro theorem to find $\lim_{n \rightarrow \infty}\frac{1}{n} \sum_{h=1}^{n-1} \alpha(h) = \lim_{n \rightarrow \infty} \alpha(n) = 0$.
\end{proof} 

\begin{proof}[Proof of Corollary 1]
The inequality \eqref{lb0} in the proof of Theorem 1 is now modified for the asymptotic case of a RP $\Xone$. Note that $A^{(0)}$ is an i.i.d. random matrix with zero mean, unit variance entries; thus, when $m, n \rightarrow \infty$ with $\frac{m}{n} \rightarrow q$ the value $\sqrt{n}\sigma_{\max}(A^{(0)})$ is known from \cite{geman1980limit} since all the singular values belong to the interval
$[1-\sqrt{q}, 1+\sqrt{q}]$. We therefore assume  $\sigma_{\max}(A^{(0)}) \simeq \sqrt{m}+\sqrt{n}$ and take the limit of \eqref{lb0} normalised by $\frac{1}{n}$ for $m, n \rightarrow \infty$, yielding 
\begin{equation}
\lim_{n \rightarrow \infty} \textstyle{\frac{1}{n}{\sum^{n-1}_{j=0}(\hat{x}_j-x_j)^2}}\geq {\displaystyle\lim_{\substack{m, n \rightarrow \infty}}} \frac{\|\Delta A \frac{x^{(n)}}{\sqrt{n}}\|^2_2}{\left(\sqrt{m} + \sqrt{n}\right)^2}
\label{eq:firstlim}
\end{equation}
with $x^{(n)}$ the $n$-th finite-length term in a plaintext $x = \{x^{(n)}\}^{+\infty}_{n = 0}$ of $\Xone$. We may now apply Lemma \ref{lemma2} in $\xi = \frac{x^{(n)}}{\sqrt{n}}$ for each $\|\Delta A \xi\|^2_2$ at the numerator of the RHS of \eqref{eq:firstlim} with $\mathscr{F}_\xi = \frac{1}{n^2}\mathscr{F}_x$, $\mathscr{E}_\xi = \frac{1}{n}\mathscr{E}_x$ and $\mathscr{E}_x, \mathscr{F}_x$ as in Lemma 1. For $m, n \rightarrow \infty$ and $\eta \leq \frac{1}{2}$, the probability in \eqref{eq:lem2} becomes $$\lim_{\substack{m, n \rightarrow \infty}} \zeta = \left\{1+(1-\theta)^{-2}\left[\lim_{n \rightarrow \infty}\frac{\frac{1}{n^2}\mathscr{F}_x}{\frac{1}{n^2}\mathscr{E}^2_x}-1\right]\right\}^{-1}$$
{Since $\Xone$ satisfies by hypothesis the assumptions of Lemma \ref{lemma1}}, then $\lim_{n \rightarrow \infty}\frac{\mathscr{F}_\xi}{\mathscr{E}_\xi^2} = 1$ and $\lim_{\substack{m, n \rightarrow \infty}} \zeta = 1$. Hence, with $\frac{m}{n} \rightarrow q$ and probability $1$ the RHS of \eqref{eq:firstlim} becomes
$$\lim_{m, n \rightarrow \infty} \frac{\|\Delta A \xi\|^2_2}{n (1 + \sqrt{\frac{m}{n}})^2} = \lim_{m, n \rightarrow \infty} \frac{4 \frac{m}{n} \eta \frac{\mathscr{E}_x}{n}}{(1 + \sqrt{\frac{m}{n}})^2} \theta, \ \forall \theta \in (0, 1) $$
and the recovery error power satisfies \eqref{rerrlb}.
\end{proof}

\section{Proofs regarding the Spherical Secrecy \\ of Compressed Sensing}
\label{acsproofs}
\begin{proof}[Proof of Proposition \ref{cltasymp}]
The proof is given by simple verification of the Lindeberg-Feller central limit theorem (see \cite[Theorem 27.4]{billingsley2008probability}) for $Y_j$ in $\Yone$ conditioned on a plaintext $x$ of $\Xone$ in \EMtwo. By the hypotheses, the plaintext $x = \{x_l\}^{n-1}_{l = 0}$ has power $0 < W_x < \infty$ and $x_l^2 \leq M_x$ for some finite $M_x > 0$. Any $\Yino_j|\Xone = \lim_{n \rightarrow \infty} \sum^{n-1}_{l = 0} Z_{j, l}, \, Z_{j, l} = A_{j, l} \frac{x_l}{\sqrt{n}}$ where $Z_{j, l}$ is a sequence of independent, non-identically distributed random variables of moments $\mathbb{E}[Z_{j,l}] = 0, \mathbb{E}[Z^{2}_{j,l}] = \frac{x^{2}_l}{n}$. By letting the partial sum $S^{(n)}_{j} = \sum^{n-1}_{l = 0} Z_{j, l}$, its mean $\mathbb{E}[S^{(n)}_j] = 0$ and $\mathbb{E}[(S^{(n)}_j)^2] = \frac{1}{n} \sum^{n-1}_{l = 0} {x^2_l}$. Thus, we verify the necessary and sufficient condition  \cite[(27.19)]{billingsley2008probability}
$$\lim_{n \rightarrow \infty} \max_{l=0, \ldots, n-1} \frac{\mathbb{E}[Z^2_{j, l}]}{\mathbb{E}[(S^{(n)}_j)^2]} = 0$$
by straightforwardly observing
$$\lim_{n \rightarrow \infty} \max_{l = 0, \ldots, n-1} \frac{\frac{x^2_l}{n}}{\frac{1}{n} \sum^{n-1}_{l = 0} {x^2_l}} \leq \frac{M_x}{W_x} \lim_{n \rightarrow \infty}  \frac{1}{n} = 0$$
The verification of this condition guarantees that $Y_j | \Xone = \lim_{n \rightarrow \infty} S^{(n)}_j$ is normally distributed with variance $\mathbb{E}[(Y_j|\Xone)^2] = \lim_{n \rightarrow \infty} \mathbb{E}[(S^{(n)}_j)^2] = W_x$, \emph{i.e.} $f_{Y_j|\Xone} \displaystyle\mathop{\rightarrow}_{\mathcal{D}} \mathcal{N}(0, W_x)$.
\end{proof} 

\begin{proof}[Proof of Proposition \ref{ROC}]
\label{ROCproof}
We start by considering $Y_j$ in $Y$ of model \EMone\, conditioned on a given $x$ with finite energy $e_x$. Each of such variables is a linear combination \eqref{measures} of $n$ i.i.d. RVs $A_{j,l}$ with zero mean, unit variance and finite fourth moments. The coefficients of this linear combination are $x = \begin{pmatrix} x_0, \cdots,x_{n-1} \end{pmatrix}$ which by now we assume to have $e_x = 1$, \emph{i.e.} to lie on the unit sphere $S^{n-1}_1$ of $\Rn$. Define $\delta = \left(\frac{1}{n} \sum^{n-1}_{l = 0}\mathbb{E}[A^4_{j, l}]\right)^{\frac{1}{4}} < \infty$, which for i.i.d. Bernoulli random matrices is $\delta = 1$, whereas for standard $\mathcal{N}(0, 1)$ random matrices $\delta = 3^{\frac{1}{4}}$. This setting verifies \cite[Theorem 1.1]{klartag2012variations}: for any $\rho \in (0, 1)$ there exists a subset $\mathcal{F} \subseteq S_1^{n-1}$ with measure $\mu(\mathcal{F})$ such that $\frac{\mu(\mathcal{F})}{\mu(S_1^{n-1})} \geq 1-\rho$ and if $x \in \mathcal{F}$, then
\begin{IEEEeqnarray}{c}\mathop{\text{sup}}_{\substack{(\alpha, \beta) \in \mathbb{R}^2 \\ \alpha < \beta}} \left|  \mathbb{P}\left(\alpha \leq \sum^{n-1}_{l=0} A_{j,l} {x}_{l}\leq \beta \right)-\right.\nonumber \\ \left. -\dfrac{1}{\sqrt{2 \pi}}\int^\beta_{\alpha} 
e^{-\frac{t^2}{2}} \dd t\right| \leq \dfrac{C(\rho) \delta^4}{n}\label{klartagproof}\end{IEEEeqnarray}
with $C(\rho)$ a positive, non-increasing function. An application of this result to $x$ with energy $e_x$, \emph{i.e.} on the sphere of radius $\sqrt{e_x}$,  $\delta = 1$ ($A$ i.i.d. Bernoulli) can be done by straightforwardly scaling the standard normal PDF in \eqref{klartagproof} to $\mathcal{N}(0, {e_x})$, thus yielding the statement of Proposition \ref{ROC}.
\end{proof}

\ifCLASSOPTIONcaptionsoff
  \newpage
\fi

\bibliographystyle{IEEEtran}
\bibliography{CompressiveSensingSecure}
\begin{IEEEbiography}
[{\includegraphics[width=1in,height=1.25in,clip,keepaspectratio]{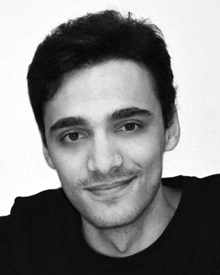}}]{Valerio Cambareri}
(S'13) received the B.S. and M.S. degree (\emph{summa cum laude}) in Electronic Engineering from the University of Bologna, Italy, in 2008 and 2011 respectively. Since 2012 he is a Ph.D. student in Electronics, Telecommunications and Information Technologies at DEI -- University of Bologna, Italy. In 2014 he was a visiting Ph.D. student in the Integrated Imagers team at IMEC, Belgium. His current research activity focuses on statistical and digital signal processing, compressed sensing and computational imaging.
\end{IEEEbiography}%
\vspace{-1.0cm}%
\begin{IEEEbiography}%
[{\includegraphics[width=1in,height=1.25in,clip,keepaspectratio]{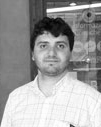}}]
{Mauro Mangia}
(S'09-M'13) received the B.S. and M.S. degree in Electronic Engineering from the University of Bologna, Italy, in 2004 and 2009 respectively; he received the Ph.D. degree in Information Technology from the University of Bologna in 2013. He is currently a post-doc researcher in the statistical signal processing group of ARCES -- University of Bologna, Italy. In 2009 and 2012 he was a visiting Ph.D. student at the \'Ecole Polytechnique F\'ed\'erale de Lausanne (EPFL). 
His research interests are in nonlinear systems, compressed sensing, ultra-wideband systems and system biology. 
He was recipient of the 2013 IEEE CAS Society Guillemin-Cauer Award and the best student paper award at ISCAS2011.
\end{IEEEbiography}%
\vspace{-1.0cm}%
\begin{IEEEbiography}%
[{\includegraphics[width=1in,height=1.25in,clip,keepaspectratio]{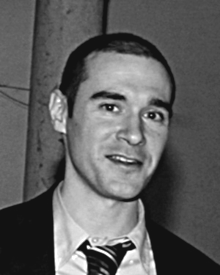}}]
{Fabio Pareschi}
(S'05-M'08) received the Dr. Eng. degree (with honours) in Electronic Engineering from University of Ferrara, Italy, in 2001, and the Ph.D. in Information Technology under the European Doctorate Project (EDITH) from University of Bologna, Italy, in 2007. He is currently an Assistant Professor in the Department of Engineering (ENDIF), University of Ferrara. He is also a faculty member of ARCES -- University of Bologna, Italy. 
He served as Associate Editor for the IEEE Transactions on Circuits and Systems -- Part II (2010-2013). His research activity focuses on analog and mixed-mode electronic circuit design, statistical signal processing, random number generation and testing, and electromagnetic compatibility. He was recipient of the best paper award at ECCTD2005 and the best student paper award at EMCZurich2005.
\end{IEEEbiography}%
\vspace{-1.0cm}%
\begin{IEEEbiography}
[{\includegraphics[width=1in,height=1.25in,clip,keepaspectratio]{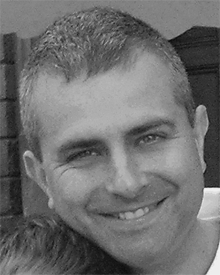}}]
{Riccardo Rovatti}
was born in 1969. He received the M.S. degree in Electronic Engineering and the Ph.D. degree in Electronics, Computer Science, and Telecommunications both from the University of Bologna, Italy in 1992 and 1996, respectively. He is now a Full Professor of Electronics at the University of Bologna. He is the author of approximately 300 technical contributions to international conferences and journals, and of two volumes. His research focuses on mathematical and applicative aspects of statistical signal processing and on the application of statistics to nonlinear dynamical systems. He received the 2004 IEEE CAS Society Darlington Award, the 2013 IEEE CAS Society Guillemin-Cauer Award, as well as the best paper award at ECCTD2005, and the best student paper award at EMCZurich2005 and ISCAS2011. He was elected IEEE Fellow in 2012 {\em for contributions to nonlinear and statistical signal processing applied to electronic systems}.
\end{IEEEbiography}%
\vspace{-1.0cm}%
\begin{IEEEbiography}
[{\includegraphics[width=1in,height=1.25in,clip,keepaspectratio]{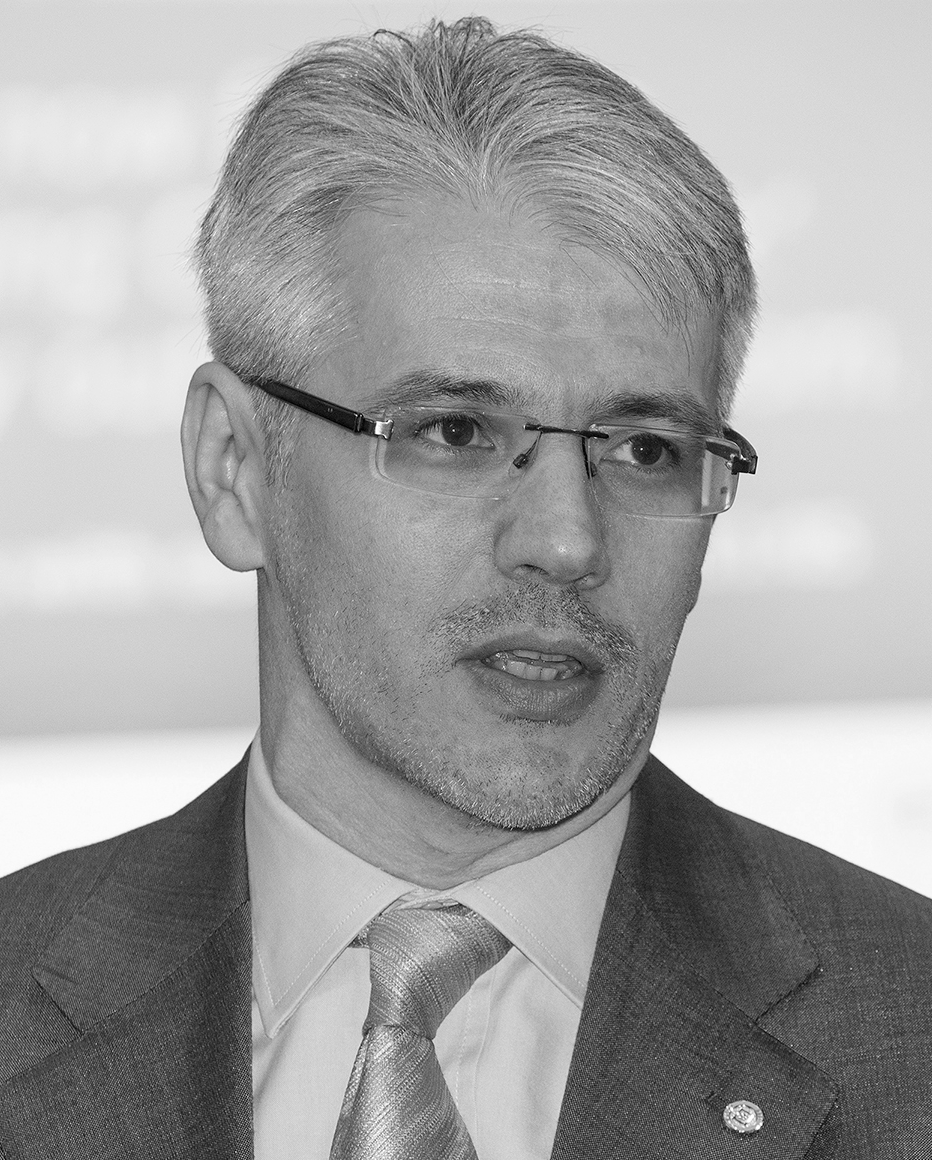}}]
{Gianluca Setti}
(S'89,M'91,SM'02,F'06) received the Ph.D. degree in Electronic Engineering and Computer Science from the University of Bologna in 1997. Since 1997 he has been with the School of Engineering at the University of Ferrara, Italy, where he is currently a Professor of Circuit Theory and Analog Electronics and is also a permanent faculty member of ARCES -- University of Bologna, Italy. 
His research interests include nonlinear circuits, implementation and application of chaotic circuits and systems, electromagnetic compatibility, statistical signal processing and biomedical circuits and systems.
Dr. Setti received  the 2013 IEEE CAS Society Meritorious Service Award and co-recipient of the 2004 IEEE CAS Society Darlington Award, of the 2013 IEEE CAS Society Guillemin-Cauer Award, as well as of the best paper award at ECCTD2005, and the best student paper award at EMCZurich2005 and at ISCAS2011.
He held several editorial positions and served, in particular, as the Editor-in-Chief for the IEEE Transactions on Circuits and Systems -- Part II (2006-2007) and of the IEEE Transactions on Circuits and Systems -- Part I (2008-2009).
Dr. Setti was the Technical Program Co-Chair at ISCAS2007, ISCAS2008, ICECS2012, BioCAS2013 as well as the General Co-Chair of NOLTA2006.
He was Distinguished Lecturer of the IEEE CAS Society (2004-2005), a member of its Board of Governors (2005-2008), and he served as the 2010 President of CASS. He held several other volunteer positions for the IEEE and in 2013-2014 he was the first non North-American Vice President of the IEEE for Publication Services and Products.
\end{IEEEbiography}
\end{document}